\documentclass[a4paper,11pt]{article}

\usepackage{color}
\usepackage{a4}
\usepackage{amssymb, amsmath}
\usepackage{mathtools}
\usepackage{amsthm}
\usepackage{graphicx}
\usepackage{comment}
\usepackage{titlesec} 
\usepackage[english]{babel}
\usepackage{bbm}
\usepackage{fancyhdr}
\usepackage{nicefrac}
\usepackage{yhmath}
\usepackage{etoolbox,refcount}
\usepackage{multicol}
\usepackage{enumitem}
\usepackage{natbib}
\usepackage{chngcntr}
\usepackage[utf8]{inputenc}
\usepackage{appendix}
\usepackage{csquotes}
\usepackage[font=small]{caption}
\usepackage[font=small]{subcaption}
\usepackage{bm}
\usepackage{multirow}
\usepackage[T1]{fontenc}
\usepackage{lmodern}
\usepackage{geometry}
\usepackage{hyperref}
\hypersetup{
    colorlinks=true,
    linkcolor=black,
    urlcolor=blue,
}
\usepackage[colorinlistoftodos,textsize=tiny]{todonotes}
\newcommand{\Comments}{1}
\newcommand{\mynote}[2]{\ifnum\Comments=1\textcolor{#1}{#2}\fi}
\newcommand{\mytodo}[2]{\ifnum\Comments=1%
	\todo[linecolor=#1!80!black,backgroundcolor=#1,bordercolor=#1!80!black]{#2}\fi}

\ifnum\Comments=1               % fix margins for todonotes
\fi

\numberwithin{equation}{section} % Nummerierung der Formeln nach Abschnitt
\newtheorem{theorem}{Theorem}

\newtheorem{cor}[theorem]{Corollary}

\theoremstyle{definition}

\newtheorem{remark}{Remark}

\newtheorem{assumption}{Assumption}

\newtheoremstyle{mytheoremstyle1} % name
        {\topsep}                    % Space above
        {\topsep}                    % Space below
        {}                   % Body font
        {}                           % Indent amount
        {\fontfamily{ptm}\selectfont\scshape\bfseries}       
        {.}
        {.5em} 
        {}
\theoremstyle{mytheoremstyle1}

\newtheoremstyle{mytheoremstyle} % name
        {\topsep}                    % Space above
        {\topsep}                    % Space below
        {}                   % Body font
        {}                           % Indent amount
        {\fontfamily{ptm}\selectfont\scshape}       
        {.}
        {.5em} 
        {}
\theoremstyle{mytheoremstyle}

\newcommand{\tildee}[1]{
\mathrel{\vcenter{
\offinterlineskip\vskip-0.9em\hbox{$#1$}\vskip-0.94em\hbox{\tiny{$\kern0.05em\sim$}}
}}}

\newcommand{\rd}{\mathrm{d}}
\newcommand{\rP}{\mathrm{P}}

\newcommand{\dense}{{\text{\tiny$[d]$}}}
\newcommand{\sparse}{{\text{\tiny$[s]$}}}

\newcommand{\bign}{\textbf{\textit{n}}}

\newcommand{\bigh}{\textbf{\textit{h}}}

\newcommand{\corS}{\rho_{\rP} \circ \Gamma^{\sparse}}

\newcommand{\llslim}{
\mathrel{\vcenter{
\offinterlineskip\vskip--0.3em\hbox{$\ll$}\vskip0.1em\hbox{$\,\sim$}
}}}

\newcommand{\R}{{\mathbb{R}}}
\newcommand{\N}{{\mathbb{N}}}
\newcommand{\Z}{{\mathbb{Z}}}

\newcommand{\Hclass}{\mathcal{H}(\alpha; M)}

\newcommand{\expec}{{\mathbb{E}}}
\newcommand{\prob}{{\mathbb{P}}}

\newcommand{\cov}{{\operatorname{\mathbb{C}\mathrm{ov}}}}
\newcommand{\one}{\mathbbm{1}}

\newcommand*{\defeq}{\mathrel{\vcenter{\baselineskip0.5ex \lineskiplimit0pt
			\hbox{\scriptsize.}\hbox{\scriptsize.}}}%
	=}
\newcommand*{\defeql}{ = \mathrel{\vcenter{\baselineskip0.5ex   	\lineskiplimit0pt
			\hbox{\scriptsize.}\hbox{\scriptsize.}}}%
   }
   
\newcounter{countitems}
\newcounter{nextitemizecount}

\makeatletter
\newcommand{\computecountitems}{%
  \edef\@currentlabel{\number\c@countitems}%
  \label{countitems@\number\numexpr\value{nextitemizecount}-1\relax}%
}
\newcommand{\nextitemizecount}{%
  \getrefnumber{countitems@\number\c@nextitemizecount}%
}

\makeatother    

\hypersetup{
  colorlinks=true,       
  urlcolor=blue         
}
\title{\Large Beyond average warming: Two-sample inference for dense-sparse functional data reveals changes in intraday temperature patterns
}

\author{\normalsize Kevin Wilk and Hajo Holzmann\footnote{Corresponding author. Prof. Dr. Hajo Holzmann, Department of Mathematics and Computer Science, Philipps-Universit\"at Marburg, Hans-Meerweinstr., 35043 Marburg, Germany}\\[5pt]
    \small Department of Mathematics and Computer Science\\ 
    \small Philipps-Universit\"at Marburg\\
    \small \{wilk,  holzmann\}@mathematik.uni-marburg.de }
\date{\normalsize May 22, 2026}

\begin{document}

\maketitle

\begin{abstract}

Modern weather stations in Germany record daily temperatures every 10 minutes, whereas measurements from historical reference periods are often only available at much coarser temporal resolutions, typically hourly. This discrepancy must be accounted for when comparing historical and current daily temperature patterns. Motivated by this problem, we develop two-sample inference procedures for functional data under sampling schemes where one sample is densely observed while the other is relatively sparse. Building on recent ideas from transfer learning for functional data, we derive estimators of the difference of the mean functions that attain optimal convergence rates in the supremum norm. We further establish a functional central limit theorem in the space of continuous functions and develop multiplier bootstrap methods for constructing uniform confidence bands. Extensions to functional time series are also discussed. Applying the proposed methodology to daily temperature curves from German weather stations, analyzed separately by month, reveals that climate change has altered not only average temperatures but also intraday temperature patterns. In particular, for stations such as Berlin, warming from morning to early afternoon exceeds the daily average increase, whereas evening and nighttime temperatures exhibit comparatively smaller increases.

\end{abstract}

\textbf{Keywords:} climate change, functional time series,  transfer learning, two-sample problem, uniform confidence bands

\section{Introduction}

%Modern measurement devices at weather stations in Germany record daily temperatures on a fine grid of every 10 minutes, while in contrast, measurements in historical reference periods are typically only available on a much sparser, often hourly scale. This has to be taken into account when comparing recent and historical daily temperature curves by two-sample statistics for functional data. In this paper we develop inference techniques for such problems and apply them to  uncover the surprising fact that apart from a significant overall daily average increase, mean daily temperature patterns themselves have also changed significantly in each month at weather stations in four major cities in Germany which cover continental, maritime as well as pre-alpine climate zones. 

Modern weather stations in Germany record temperatures at high temporal resolution, typically every 10 minutes, whereas measurements from historical reference periods are often only available on much coarser, frequently hourly, grids. This discrepancy must be accounted for when comparing historical and recent daily temperature curves. Motivated by this problem, we develop two-sample inference procedures for functional data under asymmetric dense-sparse sampling schemes. Applying the proposed methodology to temperature series from major German cities  we find that climate change has altered not only average daily temperatures in each month, but also intraday temperature patterns.

The classical two-sample problem for functional mean functions has been studied in \citet{horvath2012}, Chapter 5, using $L_2$-type statistics under the assumption of continuously observed processes. This has been extended in various directions: \citet{qiu2021two, cao2012simultaneous} allow discrete, densely-observed curves and also consider sup-norm statistics, \citet{duncan2022} construct kernel-based tests,  \citet{chen2026} investigate robustness and minimax optimality, and \citet{pomann2016two} constructs tests for equal distributions using functional principle components based on discrete observations, among others. Recently, \citet{cai2024} considered a setting in which one sample is densely observed while the other is sparse, framing the problem in a transfer-learning context and deriving optimal $L_2$ convergence rates for mean estimation. However, simultaneous inference and confidence bands in such settings have not yet been investigated.  

Temperature series have time-series structure, see \citet{hoervathkokoska, bosq} for general references on functional time series. The two-sample problem for functional time series has been investigated in  
 \citet{horvath2013estimation} for $L_2$-type statistics, and in  
\citet{dette2020} as well as in \citet{dette2022detecting} using sup-norm statistics. These papers focus on continuously observed curves, with discretely-observed functional time series, and the dense-sparse setting in particular, not having been thoroughly investigated in the literature so far. 

Our main methodological contributions are as follows. Following \citet{cai2024} we show how to make use of higher-order smoothness of the difference of the mean functions in the two samples as compared to the individual means. Such an assumption is natural in applications where the difference function is smoother and exhibits less pronounced local structure than the individual mean functions themselves. Then, in contrast to \citet{cao2012simultaneous}, we can treat the setting of only one densely-observed sample together with a sparsely-observed sample. We obtain optimal convergence rates in the  supremum norm, show how to construct confidence bands for the difference function, and further develop a test for the hypothesis that the difference function equals some constant, not necessarily zero. Finally, we show how to extend the methodology to discretely-observed functional time series, as required for the analysis of temperature data. 

The paper is structured as follows. In Section \ref{sec:apllilust} we present the main results of our data analysis for temperature series in Berlin. 
Section \ref{section:model,class,linear estimator} develops the methodology and contains the main theoretical results. In Section \ref{section:quantiles} we provide a simulation study. Section \ref{section:Real data} resumes the data analysis of Section \ref{sec:apllilust}, and provides results for temperature series at weather stations in the three further major German cities Hamburg, Munich and Frankfurt. %, which are located in distinct climate regions over Germany. While a change in the daily mean temperature pattern can be persistently observed in all weather time series, the nature of the change varies somewhat according to the climate region. 
Section \ref{sec:discuss} concludes, while technical assumptions, proofs and further numerical results and further details on the methodology used in the data analysis are provided in the supplementary appendix. The \texttt{R}-code used for the simulations and the data application is provided in a Github repository\footnote{github.com/KevinWilk/Beyond-average-warming \href{https://github.com/KevinWilk/Beyond-average-warming}{[github]}}.    

\section{Comparing daily temperature series observed at different frequencies}\label{sec:apllilust}

Our study is motivated by comparing recent and historical daily temperature patterns, which we aggregate for each month. While an average increase has been well-established, we are particularly interested in investigating whether this increase is distributed uniformly over the day, or whether the daily temperature pattern itself also has changed. Here we present our results for the temperatures at a weather station in Berlin. Further results for weather stations in Frankfurt, Hamburg and Munich are provided in Section \ref{section:Real data}, while methodological details on the implementation and selection of the tuning parameters are given in the Appendix in Section \ref{sec:detailsmehtod}. 

We analyze time series of daily air temperatures for each month, containing on the one hand the temperatures of the recent  period from $2000$ to $2025$, and on the other hand those from a historical reference period from  $1952$ to $1972$. The data are obtained from the Deutsche Wetterdienst (DWD)\footnote{opendata.dwd.de/climate\_environment/CDC/observations\_germany/climate \href{https://opendata.dwd.de/climate_environment/CDC/observations_germany/climate/10_minutes/air_temperature/historical/}{[10 minutes]} and \href{https://opendata.dwd.de/climate_environment/CDC/observations_germany/climate/hourly/air_temperature/historical/}{[hourly]}}.
Our interest is to compare the mean daily temperature curves in each month. 
In the current time period we have $26$ segments for each of the $12$  time series covering the respective month, each segment with $30$ or $31$ observations except for February, and in the reference period we have $21$ such segments for each month.  
For the data from the current time period, measurements are taken every $10\,$$\,$minutes, leading to $145$ observations  and a dense grid of observations for each day, while in the reference period measurements were only taken every hour ($25$ observations), resulting in a comparatively sparse grid. 

For a given month we estimate the daily average temperature curve $\mu^{\dense}$ in the current time period using a local linear estimator with bandwidth selected by cross-validation as described in Section \ref{section: hv cross validation} of the supplementary appendix. 
Then the difference function $\delta$, defined by
\begin{equation}\label{eq:diffappl}
\delta = \mu^{\dense}-\mu^{\sparse},
\end{equation}
where $\mu^{\sparse}$ is the daily average temperature curve in this month in the historical reference period, is estimated using residuals. This leads to improved rates if the difference $\delta$ is smooth compared to each average temperature curve, see Section \ref{sec:asympnormind}. $\mu^{\sparse}$ is then estimated as the difference of these two estimators. Details are provided in Section \ref{section:Real data}. 

Figure \ref{fig: means of Berlin} displays estimates of  $\mu^{\dense}$ and $\mu^{\sparse}$, with $\mu^{\dense}$ being consistently above $\mu^{\sparse}$, as expected. In Figure \ref{fig:real data delta equals 0 test} we display the mean difference, $\int_0^{24}\hat{\delta}(y)\,\rd y / 24$, together with a confidence interval. Non-surprisingly, the mean difference is significantly positive in each month. Further, we plot the estimate of the mean difference $\hat{\delta}$ itself together with uniform confidence bands, the construction of which is detailed in Section \ref{section:Two sample problem}. Here, surprisingly, for the months April until October it turns out that the confidence band for $\delta$ and the confidence interval for the mean difference do not intersect at certain times during the day, indicating that the difference function differs significantly from its mean. To make this precise in Figure \ref{fig:real data constant test} we construct confidence bands for the centered difference function $\delta - \int_0^{24}\delta(y)\,\rd y / 24$ (dark shaded). These are much tighter than those for $\delta$ itself (light-shaded area in Figure \ref{fig:real data constant test}), and do not contain the constant line at zero for any month. Hence we can conclude the surprising fact that the daily mean temperature pattern changed beyond the mere change in its daily average value. In Berlin with its continental climate, mean temperatures from 5:00 - 15:00 show the highest increase, in particular during the summer months, while evening and night mean temperatures saw a below-average increase. 

\begin{figure}[h!]
    \centering
    \begin{subfigure}[t]{0.85\linewidth}
        \includegraphics[width=\linewidth]{Application/Berlin/means_Berlin.png}
    \end{subfigure}
    \caption{$\hat{\mu}^{\dense}$ (blue line) from $2000$ to $2025$ and  $\hat{\mu}^{\sparse}=\hat{\mu}^{\dense}-\hat{\delta}$ (dark blue dashed line) from $1952$ to $1972$.}
    \label{fig: means of Berlin}
\end{figure}

\begin{figure}[h!]
    \centering
    \phantom{xxx}
    \begin{subfigure}[t]{0.93\linewidth}
        \includegraphics[width=\linewidth]{Application/Berlin/difference_Berlin.png}
    \end{subfigure}
    \caption{$\hat{\delta}$ (solid blue--red colored line) and $\int_0^{24}\hat{\delta}(y)\,\rd y / 24$ (dashed blue--red colored line), together with $95$\% uniform confidence bands and confidence intervals. }
    \label{fig:real data delta equals 0 test}
\end{figure}

\begin{figure}[h!]
    \centering
    \begin{subfigure}[t]{0.85\linewidth}
        \includegraphics[width=\linewidth]{Application/Berlin/centered_difference_Berlin.png}
    \end{subfigure}
    \caption{$\hat{\delta}-\int_0^{24}\hat{\delta}(y)\,\rd y / 24$ (red dashed line) with the $95$\% uniform confidence bands from Figure \ref{fig:real data delta equals 0 test} (light grey area) as well as using \eqref{eq:std statistic 1} for the centered difference (grey area). }
    \label{fig:real data constant test}
\end{figure}

\section{Methodology for unbalanced dense-sparse two-sample problems}\label{section:model,class,linear estimator}

We consider the two-sample problem for discretely-observed functional data, with a more densely-observed sample  $(Y^{\dense}_{k,l},t^{\dense}_l)$, $k=1,\hdots,{\tilde n}$, $l=1,\hdots,{\tilde p}$, and a potentially sparsely observed sample $(Y^{\sparse}_{i,j}, t^{\sparse}_{j}), $ for $i=1,\hdots,{n}$, $j=1,\hdots,{p}$. 
We thus assume $p \lesssim \tilde p$, and in addition focus on the setting $n \lesssim {\tilde n}$. 
The two samples are assumed to be independent, while for each we consider both the independent sampling situation as well as a  functional time series framework. 

Further, as motivated by our application, we focus on synchronously sampled observations, the typical design for machine recorded functional data. Thus, our model is formulated as  
\begin{align}
\begin{split}
    Y^{\sparse}_{i,j}(t^{\sparse}_{j}) &= \mu^{\sparse}(t^{\sparse}_{j}) + Z^{\sparse}_i(t^{\sparse}_{j}) + \epsilon^{\sparse}_{i,j},\quad\text{ for }  i=1,\hdots,{n} \text{ and }j=1,\hdots,{p},\\
    Y_{k,l}^{\dense}(t^{\dense}_l) &= \mu^{\dense}(t^{\dense}_l) + Z^{\dense}_k(t^{\dense}_l) + \epsilon^{\dense}_{k,l},\quad\text{ for }  k=1,\hdots,{\tilde n} \text{ and }l=1,\hdots,{\tilde p}, \label{eq:main.4}
\end{split}
\end{align}
where the $t^{\sparse}_{1}<\hdots<t^{\sparse}_{{p}}$ and the $t^{\dense}_1<\hdots<t^{\dense}_{{\tilde p}}$ are the deterministic, synchronous design points covering the interval $T$, which we take as $T= [0,1]$ in our theoretical developments. 
The observational errors $\epsilon^{\sparse}_{i,j}$ and $\epsilon^{\dense}_{k,l}$ are centered, independent of each other as well as independent of the centered and square-integrable random processes $(Z^{\sparse}_i\,:\,1\leq i \leq {n})$ and $(Z^{\dense}_k\,:\,1\leq k\leq {\tilde n})$. Finally, the mean functions are denoted by $\mu^{\sparse}$ and by $\mu^{\dense}$.

\smallskip

We are interested in comparing parameters or even the full distributions in the two samples in \eqref{eq:main.4}. Here we concentrate on the mean functions. In the following let us denote their difference by %\hajo{Nochmal umdrehen, würde besser zur Application passen}
\begin{align*}
    \delta\defeq \mu^{\sparse} - \mu^{\dense} .
\end{align*}
Note that in our data application for the sake of interpretability we consider $\mu^{\dense} - \mu^{\sparse} $ in \eqref{eq:diffappl} and in Section \ref{section:Real data}.

\subsection{Mean-estimation in unbalanced dense-sparse two-sample problems}

In the model \eqref{eq:main.4}, smoothing methods are frequently used to estimate individual mean functions, see e.g.~\citet{caiyuan2011, cao2012simultaneous, degras, berger2024dense, zhang2016sparse, li2010uniform, xiao2020asymptotic}. 
In a first step we estimate the more densely observed ${\mu}^{\dense}$ (as we assume $p \lesssim \tilde p$) by
\begin{align}
\label{eq:estimatormu}
    &\hat{\mu}^{\dense}_{\tilde n}(t; \tilde h) \defeq \sum_{l=1}^{{\tilde p}} w_{l}(t;{\tilde h},t^{\dense}_1,\hdots,t^{\dense}_{{\tilde p}})\bar{Y_l}^{\dense}, \quad \text{ where} \quad \bar{Y_l}^{\dense}=\frac{k=1}{\tilde n}\sum_1^{\tilde n} Y_{k,l}^{\dense}.
\end{align}
Here $w_{l}(t;{\tilde h},t^{\dense}_1,\hdots,t^{\dense}_{{\tilde p}}) = w_{l}(t;{\tilde h})$ are deterministic weights depending on the design points $t^{\dense}_l$ and on a bandwidth  parameter ${\tilde h} > 0$ for localization at $t^{\dense}_l$, and on which we impose the assumptions listed in Section \ref{ass:genericweights}. These are checked for local polynomial weights in Lemma 1 in \citet{berger2024dense}. 
Next we could proceed similarly as e.g.~\citet{cao2012simultaneous} and analogously to \eqref{eq:estimatormu} estimate ${\mu}^{\sparse}$, say by $ \tilde{\mu}^{\sparse}_{{ n}}(\cdot, h)$, and then take the difference 
\begin{equation}\label{eq:diffest}
\bar{\delta}_{\bign}(t; h, \tilde h) \defeq \tilde{\mu}^{\sparse}_{{ n}}(t;h) - \hat{\mu}^{\dense}_{{\tilde n}}(t;\tilde h),
\end{equation}
where $\bign = (n,\tilde n)$. However, as has been prominently observed in \citet{cai2024}, this estimator can be improved upon in the situation where first, $p \ll \tilde p$, and second, there is only moderately pronounced structure in the difference $\delta$ as compared to the individual mean functions, or more formally expressed higher-order smoothness of $\delta$ as compared to  $\mu^{\sparse}$ and $\mu^{\dense}$. Following \citet{cai2024} we therefore estimate $\delta$ directly based on residuals in the sparse sample, which are formed using $\hat{\mu}^{\dense}_{\tilde n}(\cdotp; \tilde h)$ from  \eqref{eq:estimatormu}. Thus, we set
\begin{align}
    &\hat{\delta}_{\bign}(t; h, \tilde h)\defeq \sum_{j=1}^{{p}} w_{j}(t;{h},t^{\sparse}_{1},\hdots,t^{\sparse}_{{p}})\big(\bar{Y_j}^{\sparse} - \hat{\mu}^{\dense}_{{\tilde n}}(t^{\sparse}_{j}; \tilde h)\big), \label{eq:estimators}
\end{align}
and then
\begin{align}
    &\hat{\mu}^{\sparse}_{\bign}(t; h, \tilde h)\defeq \hat{\delta}_{\bign}(t; h, \tilde h) + \hat{\mu}^{\dense}_{{\tilde n}}(t; \tilde h). \label{eq:estimators1}
\end{align}

\subsection{Rates of convergence and asymptotic normality}\label{sec:asympnormind}

We characterize the amount of structure in the mean functions and their difference $\delta$ by H\"older smoothness assumptions. 
The H\"older class of functions $f\colon T \rightarrow \R$ of smoothness of $\alpha>0$ and 
H\"older-constant $M>0$ is defined by
\begin{align*}
	\Hclass = & \big\{f\colon T \to \R \mid \forall \  k=0,\ldots,\lfloor \alpha \rfloor, \quad |f^{(k)}(t)|\leq M \\
	& \quad\text{and}\quad |f^{(\lfloor\alpha\rfloor)}(t)-f^{(\lfloor\alpha\rfloor)}(s)|\leq M|t-s|^{\alpha - \lfloor\alpha\rfloor}, \quad t,s \in T\big\}.
\end{align*}
Here, $\lfloor \alpha \rfloor = \max\{ k \in \N_0 \mid k < \alpha\}$. The smaller $\alpha$ (and $M$), the richer the class of functions which may then be less regular and have more pronounced features then for larger $\alpha$ and $M$. Thus, we will assume
\begin{equation}\label{eq:smoothnessass}
 \mu^{\dense}, \mu^{\sparse} \in \Hclass \qquad \text{but}\qquad \delta \in \mathcal{H}(\alpha_\delta; M_\delta)\qquad \text{with}\quad \alpha_\delta \geq \alpha,
\end{equation}
and make use in particular of the situation in which $\alpha_\delta > \alpha$.

For the processes we make H\"older-smoothness assumptions of potentially lower order. 
Specifically, we assume that $\expec[Z_i^{\sparse}(0)^2],\expec[Z_k^{\dense}(0)^2]<\infty$ hold for $i=1,\hdots,{n}$ and $k=1,\hdots,{\tilde n}$ and there exist random variables $V^{\sparse},V^{\dense}>0$ with $\expec[(V^{\sparse})^2],\,\expec[(V^{\dense})^2]<\infty$ such that for constants \text{$0<{\beta},{\tilde{\beta}} \leq 1$}  we have almost surely that
\begin{align}
    |Z^{\sparse}_i(t)-Z^{\sparse}_i(s)|&\leq \, V^{\sparse}_i\,  |t-s|^{{\beta}}  \, ,\qquad 
    |Z_k^{\dense}(t)-Z_k^{\dense}(s)|\leq  \, V^{\dense}_k\,  |t-s|^{{\tilde{\beta}}},\qquad t,s\in T.\label{eq:Z}
\end{align}

First we consider the setting of independent samples, as formulated in the following assumption. 

\begin{assumption}[Independent processes and errors]\label{assumption:Independence}
      The processes $Z^{\sparse}_1,\hdots,Z^{\sparse}_{{n}}$, respectively $Z^{\dense}_1,\hdots,Z^{\dense}_{{\tilde n}}$, are independent copies of the centered random processes $Z^{\sparse}$, respectively  $Z^{\dense}$ with finite forth momemts, having covariance kernels $\Gamma^{\sparse}$ and $\Gamma^{\dense}$ respectively, and satisfying the H\"older condition \eqref{eq:Z}.  
      The observational errors $\epsilon^{\sparse}_{i,j}$, $\epsilon^{\dense}_{k,l}$ are all independent, and independent of the processes $Z^{\sparse}_i$ and $Z^{\dense}_k$.   
    Additionally, we assume the distributions of $\epsilon^{\sparse}_{i,j}$ and $\epsilon^{\dense}_{k,l}$ to be sub$\,$-$\,$Gaussian, with $\sigma^2>0$ an upper bound for all sub$\,$-$\,$Gaussian norms of $\epsilon^{\sparse}_{i,j}$ and $\epsilon^{\dense}_{k,l}$.      
\end{assumption}

\begin{remark}[Convergence rates for $\bar{\delta}$ in \eqref{eq:diffest}]
For the simple, difference-based estimate $\bar{\delta}_n$ in \eqref{eq:diffest}, in case $p \lesssim \tilde p$ and $n  \lesssim \tilde n$, for appropriate choices of $h^*$ and $\tilde h^*$ of the bandwidths we obtain from theorem 1 in \citet{berger2024dense} that 
    \begin{equation}\label{eq:ratesimplediff}
        \sup_{ \mu^{\dense}, \mu^{\sparse} \in \Hclass}\, \expec\big[\big\|\bar{\delta}_{\bign}(\cdotp; h^*, \tilde h^*) - \delta\big\|_\infty \big] = \, \mathcal O\Big(\max\Big(p^{-\alpha}, n^{-1/2}, \Big(\frac{\log(p\, n)}{p\, n} \Big)^{\alpha/(2\, \alpha + 1)} \Big) \Big),
    \end{equation}
    these rates are minimax-optimal. In particular, in contrast to random, asynchronous designs, for the regular fixed design in \eqref{eq:main.4} for consistency one requires $p \to \infty$ as $n \to \infty$, otherwise there remains a non-negligible discretization error.  

    Now if $(\log n)\, n^{1/(2\alpha)}\ll p$ the $n^{-1/2}$-rate dominates and one additionally has process convergence of $\sqrt n\big( \bar{\delta}_{\bign}(\cdotp; h^*, \tilde h^*) - \delta\big)$ to a Gaussian process with continuous sample paths. This is the setting in which even the more sparsely-observed sample in \eqref{eq:main.4} is sufficiently dense and valid inference based on the simple difference-based estimator $\bar{\delta}$ is already feasible. Let us stress that a dense regime is determined by $p$ being large enough compared to $n$ given the smoothness $\alpha$ of the target function: for larger $\alpha$ and thus smoother target functions, the dense regime also covers smaller $p$ as is made precise in the condition $(\log n)\, n^{1/(2\alpha)}\ll p$. 

    We are particularly interested in cases in which at least potentially $(\log n)\, n^{1/(2\alpha)}\gg p$ (sparse regime for estimating $\mu^{\sparse}$), so that the rate in \eqref{eq:ratesimplediff} is slower than $n^{-1/2}$, and even if it remains $n^{-1/2}$ for $(\log n)\, n^{1/(2\alpha)}\eqsim p$, asymptotic normality does not hold. We show below that in this situation, for $\alpha_\delta > \alpha$ the estimator $\hat{\delta}$ in \eqref{eq:estimators} can still be asymptotically normally distributed at rate $\sqrt n$, so that valid asymptotic inference is possible. 
\end{remark}

\begin{theorem}[CLT and rate of convergence: Independent samples]\label{theorem:CLT_delta}
    In model \eqref{eq:main.4} consider the residual-based estimator $\hat{\delta}_{\bign}$ in \eqref{eq:estimators} for the difference $\delta = \mu^{\sparse} - \mu^{\dense}$ under the smoothness assumption \eqref{eq:smoothnessass}, and the  Assumption \ref{assumption:weights} in Section \ref{ass:genericweights} on the weights $w^{\sparse}_{j}(\cdot\,;{h})$ and $w^{\dense}_{l}(\cdot\,;{\tilde h})$  with $d = \lfloor\alpha_{\delta} \rfloor$ and $\tilde{d} = \lfloor \alpha \rfloor$. Additionally, impose the H\"older conditions \eqref{eq:Z} on the processes, the independence Assumption \ref{assumption:Independence} and Assumption \ref{assumption: design} on the design, and consider the setting $p \lesssim \tilde p$ and $n \lesssim \tilde n$. 
   \begin{enumerate}
       \item \textit{(Rate of convergence)} Then for $\alpha_\delta \geq \alpha$,
        \begin{align}
            & \sup_{(h,\tilde h)}\, \sup_{ \mu^{\dense} \in \Hclass}\,\sup_{\delta \in \mathcal{H}(\alpha_\delta; M_\delta)} a_{\bign,h,\tilde h}^{-1}\, \expec\big[\big\|\hat{\delta}_{\bign}(\cdotp; h, \tilde h) - \delta\big\|_\infty \big] = \, \mathcal O\big(1\big),\label{eq:convrateh}\\
            & a_{\bign,h,\tilde h} = \max\Big(n^{-1/2}, h^{\alpha_\delta},  \Big(\frac{\log(1/h)}{p\, n\, h} \Big)^{1/2} \Big), \tilde h^{\alpha}, \Big(\frac{\log(1/ \tilde h)}{\tilde p\, \tilde n\, \tilde h} \Big)^{1/2} \Big)  \Big),\nonumber
        \end{align}
         where the first $\sup$ is over $ (h,\tilde h) \in (c/p,h_0]\times (\tilde c/\tilde p,\tilde h_0]$ with $c,\tilde c, h, \tilde h$ from Assumption \ref{assumption:weights}. Therefore, choosing $h^* \sim \max\Big(c/p, \big(\frac{\log(p\, n)}{p\, n} \big)^{1/(2\, \alpha_\delta + 1)} \Big) $ and similarly for $\tilde h^*$ we obtain 
        \begin{align}
            &  \sup_{ \mu^{\dense} \in \Hclass}\,\sup_{\delta \in \mathcal{H}(\alpha_\delta; M_\delta)} \, \expec\big[\big\|\hat{\delta}_{\bign}(\cdotp; h^*, \tilde h^*) - \delta\big\|_\infty \big] = \, \mathcal O\big(a_{\bign, p, \tilde p}\big),\label{eq:convrateh}\\
            &  a_{\bign,p,\tilde p} = \max\Big(n^{-1/2}, p^{-\alpha_\delta},  \Big(\frac{\log(p\, n)}{p\, n} \Big)^{\alpha_\delta/(2\, \alpha_\delta + 1)} \Big), \tilde p^{-\alpha}, \Big(\frac{\log(\tilde p\, \tilde n)}{\tilde p\, \tilde n} \Big)^{\alpha/(2\, \alpha + 1)} \Big)  \Big).\nonumber
        \end{align}
        \item[2.] \textit{(Asymptotic normality)} Suppose that $(\log n)^{1 + 3 \eta}\, n^{1/(2 \alpha_\delta)} \lesssim p$ for some $\eta>0$, and consider the asymptotically non-empty range of bandwidths $H_n = [c_1 (\log n)^{1+\eta}/p,c_2 / (n^{1/(2 \alpha_\delta)}\, (\log n)^\eta)]$, $c_1,c_2>0$, together with a given sequence $h_n \in H_n$. Similarly, also assume that \linebreak $(\log \tilde n)^{1 + 3 \tilde \eta}\, n^{1/(2 \alpha)} \lesssim \tilde p$ for some $\tilde \eta>0$, consider $\tilde H_n = [\tilde c_1 (\log \tilde n)^{1+\tilde \eta}/\tilde p,\tilde c_2 / (n^{1/(2 \alpha)}\, (\log \tilde n)^{\tilde \eta})]$, $\tilde c_i>0$, together with a given sequence $\tilde h_n \in \tilde H_n$. 
        Then if $n\llslim {\tilde n}$ we obtain 
    \begin{equation}\label{eq:distinctorder}
        \sqrt{n}(\hat{\delta}_{\bign}(\cdotp; h_n, \tilde h_n)-\delta) \overset{D}{\longrightarrow} \mathcal{G}(0,\Gamma^{\sparse}),
    \end{equation}
    where weak convergence is in the space $C[0,1]$, and $\mathcal{G}$ is a continuous, real-valued Gaussian process on $[0{,}1]$ with covariance kernel $\Gamma^{\sparse}$ of $Z^{\sparse}$. 

    If instead $n \simeq {\tilde n}$ applies with $ n/\tilde n \sim C$, then
    \begin{equation}\label{eq:sameorder}
        \sqrt{n}(\hat{\delta}_{\bign}(\cdotp; h_n, \tilde h_n)-\delta) \overset{D}{\longrightarrow} \mathcal{G}(0,\Gamma^{\sparse}+C\,\Gamma^{\dense}).
    \end{equation}
   \end{enumerate}
\end{theorem}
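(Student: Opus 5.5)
The plan is to decompose the error of $\hat{\delta}_{\bign}$ into deterministic and stochastic parts by inserting the model \eqref{eq:main.4} into \eqref{eq:estimators}. Write $\bar Y^{\sparse}_j=\mu^{\sparse}(t^{\sparse}_j)+\bar Z^{\sparse}(t^{\sparse}_j)+\bar\epsilon^{\sparse}_j$, and similarly for the dense sample. Also set $\hat\mu^{\dense}_{\tilde n}(s;\tilde h)-\mu^{\dense}(s)=B^{\dense}(s)+S^{\dense}_Z(s)+S^{\dense}_\epsilon(s)$, where $B^{\dense}=\sum_l w_l(\cdot;\tilde h)\mu^{\dense}(t^{\dense}_l)-\mu^{\dense}$ and $S^{\dense}_Z=\sum_l w_l\bar Z^{\dense}(t^{\dense}_l)$. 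Using $\mu^{\sparse}=\delta+\mu^{\dense}$, we get
\begin{align*}
\hat\delta_{\bign}(t)-\delta(t)=\Big[\sum_j w_j(t;h)\delta(t^{\sparse}_j)-\delta(t)\Big]+\sum_j w_j(t;h)\big(\bar Z^{\sparse}(t^{\sparse}_j)+\bar\epsilon^{\sparse}_j\big)-\sum_j w_j(t;h)\big(B^{\dense}+S^{\dense}_Z+S^{\dense}_\epsilon\big)(t^{\sparse}_j).
\end{align*}
The central point is that the first bracket involves only $\delta$. By the polynomial reproduction of order $d=\lfloor\alpha_\delta\rfloor$ in Assumption \ref{assumption:weights}, it is $O(h^{\alpha_\delta})$ uniformly over $\mathcal H(\alpha_\delta;M_\delta)$. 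The weights also have uniformly bounded $\ell_1$-norm, so the last sum is bounded in sup-norm by a constant times $\|B^{\dense}+S^{\dense}_Z+S^{\dense}_\epsilon\|_\infty$. The rate of that term, $\tilde h^\alpha+\tilde n^{-1/2}+(\log(1/\tilde h)/(\tilde p\tilde n\tilde h))^{1/2}$, follows exactly as in the one-sample analysis of \citet{berger2024dense}, Theorem 1. Since $n\lesssim\tilde n$, the term $\tilde n^{-1/2}$ is absorbed into $n^{-1/2}$.

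For the sparse stochastic terms, the smoothed-error term $\sum_j w_j\bar\epsilon^{\sparse}_j$ is a sub-Gaussian process indexed by $t$. Its variance is $\lesssim 1/(pnh)$ by the weight assumptions, and its increments are Lipschitz in $t$ at scale $h$. A chaining or union bound over a grid of mesh $\asymp h/p$ therefore gives $\expec\|\cdot\|_\infty\lesssim(\log(1/h)/(pnh))^{1/2}$. The process term $\sum_j w_j\bar Z^{\sparse}(t^{\sparse}_j)$ has sup-norm of order $\expec\|\bar Z^{\sparse}\|_\infty\lesssim n^{-1/2}$, using the Hölder condition \eqref{eq:Z} with square-integrable $V$ and a standard maximal inequality. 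Combining these terms gives \eqref{eq:convrateh} uniformly in the bandwidths. The rate with $h^*,\tilde h^*$ follows by balancing, with the lower limit $c/p$ producing the $p^{-\alpha_\delta}$ and $\tilde p^{-\alpha}$ terms.

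For part 2, multiply the decomposition by $\sqrt n$. The bandwidth windows $H_n$ and $\tilde H_n$ are designed so that three negligibility conditions hold. First, $\sqrt n h_n^{\alpha_\delta}\lesssim(\log n)^{-\eta\alpha_\delta}\to0$, and likewise $\sqrt n\tilde h_n^{\alpha}\to0$. Second, $\sqrt n(\log(1/h_n)/(pnh_n))^{1/2}\lesssim(\log n/(p h_n))^{1/2}\lesssim(\log n)^{-\eta/2}\to0$, and the same for the dense error term. Third, the smoothing error of the processes, $\sqrt n\|\sum_j w_j\bar Z^{\sparse}(t^{\sparse}_j)-\bar Z^{\sparse}\|_\infty\lesssim\sqrt n\,\|\bar V^{\sparse}\|\,h_n^{\beta}$, is $o_P(1)$ because $\sqrt n\bar V$-type averages of the Hölder increments are handled by the local support of the weights. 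The same holds for the dense analogue. We then have
\begin{equation*}
\sqrt n(\hat\delta_{\bign}-\delta)=\sqrt n\,\bar Z^{\sparse}-\sqrt{n/\tilde n}\,\sqrt{\tilde n}\,\bar Z^{\dense}+o_P(1)
\end{equation*}
in $C[0,1]$. By the Hölder condition with $\expec V^2<\infty$, the Jain–Marcus CLT in $C[0,1]$ applies to both averages, and the two are independent. If $n\ll\tilde n$, the second term vanishes, which gives \eqref{eq:distinctorder}. If $n/\tilde n\to C$, Slutsky and independence give the covariance $\Gamma^{\sparse}+C\Gamma^{\dense}$, as in \eqref{eq:sameorder}.

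The main obstacle I expect is the third negligibility condition: controlling $\sqrt n\sup_t|\sum_j w_j(t)(\bar Z(t^{\sparse}_j)-\bar Z(t))|$. The naive bound $\bar V h^\beta$ is $O_P(h^\beta)$ only after scaling, since $\sqrt n\,\bar Z$ is tight. I would exploit this by writing the term as $\sum_j w_j(t)(G_n(t_j)-G_n(t))$ with $G_n=\sqrt n\bar Z$. This is bounded by the modulus of continuity of $G_n$ at scale $\asymp h_n\to0$ times $\sum_j|w_j|$. Asymptotic equicontinuity of $G_n$, which comes from the Jain–Marcus theorem, then gives $o_P(1)$.
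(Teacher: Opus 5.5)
Your proposal is correct. For part 1 it is essentially the paper's argument. You use the same three-term decomposition, bound the bias of the $\delta$-smoother by reproduction of order $d=\lfloor\alpha_\delta\rfloor$, and propagate the dense error through the $\ell_1$-bounded weights (W5) using the one-sample bound of \citet{berger2024dense}.

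For part 2 your route is genuinely different. The paper gets the sparse term from Theorem 3 of \citet{berger2024dense}. For $n/\tilde n\to C$ it proves a functional CLT from scratch for the doubly smoothed array $X^{\dense}_{\tilde n,k}(x)=\sum_{j}\sum_{l}w^{\sparse}_j(x;h)\,w^{\dense}_l(t^{\sparse}_j;\tilde h)\,Z^{\dense}_k(t^{\dense}_l)$. It does this by checking conditions i)--v) of \citet{pollard}, Theorem 10.6: manageability via envelopes and Lemma 3 of \citet{berger2024dense}, convergence of covariances, Lindeberg, and convergence of $\rho_n$.

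You instead write $G_n=\sqrt n\,\bar Z^{\sparse}$ and $\tilde G_{\tilde n}=\sqrt{\tilde n}\,\bar Z^{\dense}$, with $\omega$ the modulus of continuity. Using (W1), (W2) and (W5), you show that the smoothed process terms lie within $C\,\omega_{G_n}(h_n)$, respectively $C\,(\omega_{\tilde G_{\tilde n}}(\tilde h_n)+\omega_{\tilde G_{\tilde n}}(h_n))$, of these raw averages. Asymptotic equicontinuity removes these errors because $h_n,\tilde h_n\to0$. After that you only need the plain Jain--Marcus CLT and independence of the two samples.

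Your approach is shorter and avoids the capacity and covariance computations. It also transfers directly to the time-series setting of Remark \ref{assumption:tskernels}. There it needs only tightness of $n^{-1/2}\sum_i Z_i$ in $C[0,1]$, which is assumed, rather than a triangular-array CLT such as Theorem \ref{theorem:CLT}. The paper's route gives the limit of the smoothed array directly, without passing through the raw process. That would matter only if the bandwidths did not shrink, which is not the case here.

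Three minor corrections.

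\begin{enumerate}
\item Drop the claim $\sqrt n\,\bar V h_n^{\beta}=o_P(1)$ in your third negligibility condition. Since $\bar V\to\expec V>0$, this would require $\sqrt n\,h_n^{\beta}\to0$, and $H_n$ does not give that. Your closing equicontinuity argument is the correct replacement.

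\item A union bound over a grid of mesh $h/p$ yields $\log(p/h)$, not $\log(1/h)$. For the bound uniform in $h$ with $\log(1/h)$, use chaining. The canonical metric is $\lesssim (nph)^{-1/2}\min(|t-s|/h,1)$, which follows from (W3), (W4) and Assumption \ref{assumption: design}.

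\item Membership $h_n\in H_n$ does not imply $\log(1/h_n)\lesssim\log n$. At the lower end of $H_n$ you only know $\log(1/h_n)\le\log p$. So your noise step needs $\log p\lesssim\log n$, and likewise $\log\tilde p\lesssim\log\tilde n$; otherwise $\sqrt n$ times the supremum of the smoothed noise need not vanish. The statement does not impose this explicitly, and the paper's proof, which cites \citet{berger2024dense} for this step, does not discuss it either. So this caveat is shared with the paper rather than a defect of your approach.
\end{enumerate}
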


\begin{remark}[Comments on Theorem \ref{theorem:CLT_delta}]
    Let us comment on the results of Theorem \ref{theorem:CLT_delta}. 
    To achieve the $\sqrt n$-rate, in case $\alpha_\delta > \alpha$ we only require $(\log n)\, n^{1/(2\, \alpha_\delta)}\ll p$ together with $(\log \tilde n)\, n^{1/(2\, \alpha)}\, n/\tilde n \ll \tilde p$ instead of $(\log n)\, n^{1/(2\alpha)}\ll p$ as for the estimator in \eqref{eq:diffest}.  

    The condition  $(\log n)^{1 + 3 \eta}\, n^{1/(2 \alpha_\delta)} \lesssim p$ makes $(\log n)\, n^{1/(2 \alpha_\delta)} \ll p$ quantitative and allows bandwidth choices of order $h \eqsim (\log n)^{1+\eta}/p$, which are independent of the unknown $\alpha_\delta$ and only slightly above the interpolation level $h \eqsim 1/p$. Further, the condition $(\log \tilde n)\, n^{1/(2 \alpha)} \ll \tilde p$ ensures that the remaining terms in \eqref{eq:convrateh} which depend on $\tilde p$ and $\tilde n$ are negligible as compared to $n^{-1/2}$, and $(\log \tilde n)^{1 + 3 \tilde \eta}\, n^{1/(2 \alpha)} \lesssim \tilde p$ makes this condition quantitative and allows for a choice of $\tilde h$ not depending on $\alpha$.  
\end{remark}

\begin{remark}[Estimating $\int \delta$]\label{rem:estint}
    In our application we are also interested in the average value of the difference between the mean functions in \eqref{eq:main.4}, that is, the integral $\int_0^1 \delta$. Here we do not have the additional error from the supremum of the observational errors: Under the assumptions of Theorem \ref{theorem:CLT_delta}, setting $h_n' \sim \max\Big(c/p, \big(\frac{1}{p\, n} \big)^{1/(2\, \alpha_\delta + 1)} \Big) $ and similarly for $\tilde h_n'$ we have that  
    \begin{align}
        &  \sup_{ \mu^{\dense} \in \Hclass}\,\sup_{\delta \in \mathcal{H}(\alpha_\delta; M_\delta)} \, \expec\big[\big|\int_0^1 \hat{\delta}_{\bign}(t; h_n', \tilde h_n' ) \, \rd t - \int_0^1\, \delta(t)\, \, \rd t\big| \big] = \, \mathcal O\big(\bar a_{\bign, p, \tilde p}\big),\label{eq:convratehint}\\
        &  \bar a_{\bign,p,\tilde p} = \max\big(n^{-1/2}, p^{-\alpha_\delta},   \tilde p^{-\alpha} \big).\nonumber
    \end{align}
    Moreover, if the $n^{-1/2}$-rate dominates in \eqref{eq:convratehint} and $ n/\tilde n \sim C$, $C \geq 0$, then
    \begin{equation}\label{eq:sameorderint1}
        \sqrt{n}\Big(\int_0^1 \hat{\delta}_{\bign}(t; h_n, \tilde h_n) \, \rd t - \int_0^1\, \delta(t)\, \, \rd t\Big) \overset{D}{\longrightarrow} \mathcal N\Big(0,\int_{[0,1]^2} \, \big(\Gamma^{\sparse}+C\,\Gamma^{\dense}\big)(s,t) \rd s\, \rd t\Big).
    \end{equation}   
\end{remark}

Now let us turn to functional time series. 

\begin{assumption}[Time series of processes]\label{assumption:timeseries}
      The processes $(Z^{\sparse}_i)_{i \in \N}$ and $(Z^{\dense}_k)_{k \in \N}$ both are stationary and ergodic sequences, independent of each other,  of stochastic processes on $[0,1]$ with finite forth moments which satisfy the H\"older condition \eqref{eq:Z}.  
      For the observational errors $\epsilon^{\sparse}_{i,j}$, $\epsilon^{\dense}_{k,l}$ we keep the independence and distributional assumptions from Assumption \ref{assumption:Independence}. 
\end{assumption}

\begin{remark}\label{assumption:tskernels}
    For the stationary time series setting of Assumption \ref{assumption:timeseries} we introduce the cross-covariance kernel defined for $b \in \Z$ and $t,s \in [0,1]$ by
    \begin{align*}
        \Gamma^{\sparse}(t,s;b) = \expec[Z_{1}^{\sparse}(t)\,Z_{1+b}^{\sparse}(s)], \qquad \Gamma^{\dense}(t,s;b) = \expec[Z_{1}^{\dense}(t)\,Z_{1+b}^{\dense}(s)].
    \end{align*}
    By stationarity, 
    \begin{equation}\label{eq:simcrosscov}
        \Gamma^{\sparse}(s,t;-b) = \Gamma^{\sparse}(t,s;b).
    \end{equation}
    We will need to impose additional assumptions which guarantee dependent Jain-Marcus central limit theorems for the sequences $(Z^{\sparse}_i)_{i \in \N}$ and $(Z^{\dense}_k)_{k \in \N}$, under which the long-run covariance kernels 
    $$ \mathbf{\Gamma}^{\sparse}(t,s) = \Gamma^{\sparse}(t,s) + \sum_{b=1}^\infty \Big(\Gamma^{\sparse}(t,s;b) + \Gamma^{\sparse}(t,s;-b)\Big)\,, \qquad t,s \in [0,1],$$
    similarly for  $\mathbf{\Gamma}^{\dense}(t,s) $, exist and are continuous, and under which
    \begin{align*}
        \frac1{n^{1/2}}\, \big(Z_{1}^{\sparse} + \ldots + Z_{n}^{\sparse} \big) \overset{D}{\longrightarrow} \mathcal{G}(0,\mathbf{\Gamma}^{\sparse}),
    \end{align*}
    where weak convergence is in $C[0,1]$. \cite[Theorem 1]{dette2020} present such a result for $\varphi$-mixing functional time series with mixing coefficients which decrease sufficiently fast. More generally, this follows from a Donsker theorem for dependent sequences when taking the evaluation function $F_s:C[0,1] \to \R$, $f \mapsto f(s)$ as indexing function class, see e.g.~\cite[Example 2.11.13]{vaart1996}, which allows to deduce Jain Marcus CLTs for time series under the mixing conditions stated e.g.~in \cite[Theorem 5.2]{dedecker2002maximal}. 

    Then, from triangular array extensions of such results, which we state exemplary for the CLT from \citet{dette2020} as Theorem \ref{theorem:CLT} in the appendix, we obtain that the asymptotic normality in \eqref{eq:distinctorder} and \eqref{eq:sameorder} continues to hold but with $\Gamma^{\sparse}$ and $\Gamma^{\dense}$ replaced by the long-run covariance kernels $\mathbf{\Gamma}^{\sparse}$ and $\mathbf{\Gamma}^{\dense}$.
\end{remark}

\subsection{Studentized confidence bands and the two-sample problem}\label{section:Two sample problem}

In our application we will investigate whether $\delta$ is equal to some constant value $c$, not necessarily $c=0$. Therefore, in this section we focus on the centered difference and its estimate
\begin{equation}\label{eq:centereddiff}
 \Delta = \delta - \int_0^1 \delta(y)\, \rd y,\qquad \hat{\Delta}_{\bign}(\cdotp;h,\tilde{h}) = \hat{\delta}_{\bign}(\cdotp;h,\tilde{h}) - \int_0^1 \hat{\delta}_{\bign}(y;h,\tilde{h})\, \rd y. 
\end{equation}
Analogous methods and results as developed in this section apply to the difference $\delta$ itself.

\begin{remark}[Studentization and covariance kernel estimates]
We work with studentization, and therefore require estimates of the covariance kernels $\Gamma^{\sparse}$ and $\Gamma^{\dense}$, and in the time-series setting also of the cross-covariance kernels and of the long-run covariance kernel. 

For the covariance kernels we use the estimator introduced in \citet{berger2025}, which for $\Gamma^{\sparse}$ is defined as
\begin{align}
    \widehat{\Gamma}^{\sparse}_{ n}(t,s;{h})\defeq \frac{1}{ n-1}\sum_{i=1}^{ n}\sum_{1\leq j<l \leq  p} w^{\sparse}_{j,l}(t,s;{h})\big(Y^{\sparse}_{i,j}Y^{\sparse}_{i,l}-\bar{Y_{j}}^{\sparse}\bar{Y_{l}}^{\sparse}\big), \qquad t \leq s, \label{eq:Gamma_estimator}
\end{align}
where the weights $w^{\sparse}_{j,k}(t,s;{h})$ only build on design point pairs $(t^{\sparse}_j, t^{\sparse}_l)$, $j < l$ above the diagonal. The idea is to make use of higher-order smoothness of $\Gamma^{\sparse}$ if there is a non-differentiable kink on the diagonal. See \citet{berger2025} for further details. 
The estimator would then be extended by symmetry, $w^{\sparse}_{j,l}(t,s;{h}) \defeq w^{\sparse}_{j,l}(s,t;{h})$, $t > s$, resulting in a continuous function over all $(t,s)$. However, we mainly require it for $t=s$, that is the variance function. $\widehat{\Gamma}^{\dense}_{{ \tilde n}}$ is defined analogously. 
  
In the time-series setting we also need to estimate cross-covariance kernels and the long-run covariance kernel. For a lag $b \geq 1$, if we use the weights from \eqref{eq:Gamma_estimator} we can set-up the estimator 
\begin{align}\label{eq:crosscov}
    \widehat{\Gamma}^{\sparse}_{n}(t,s;b;{h})\defeq \frac{1}{n-1}\sum_{i=1}^{n-b}\sum_{1\leq j \leq l\leq p} w^{\sparse}_{j,l}(t,s;{h})\big(Y^{\sparse}_{i,j}Y^{\sparse}_{i+b,l}-\bar{Y_{j}}^{\sparse}\bar{Y_{l}}^{\sparse}\big), \qquad s \leq t, 
\end{align}
for $b \leq -1$ the estimate is set-up similarly. 
From the symmetry relation \eqref{eq:simcrosscov}  the estimates \eqref{eq:crosscov}, $s \leq t$ together with those for negative $b$ again suffice to estimate full cross-covariance kernels. Note that we get two distinct estimates on the diagonal $t=s$ resulting from $b$ and $-b$. We use the average of these estimates in the following. Finally, for the long-run covariance kernel we take 
\begin{align*}
\widehat{\mathbf{\Gamma}}^{\sparse}(t,s;\bigh,m) = \widehat{\Gamma}^{\sparse}_{ n}(t,s;{h}_0) + \sum_{b=1}^m w_{b;m}\, \Big(\widehat{\Gamma}^{\sparse}_{ n}(t,s;b;{h}_b) + \widehat{\Gamma}^{\sparse}_{ n}(t,s;-b;{h}_b)\Big)\,, \qquad s\leq t,
\end{align*}
where $\bigh = (h_0,\hdots,h_m)^\top$, for a tuning parameter $m \in \N$ and weights $w_{b;m}$. We use the Bartlett weights $w_{b;m} \defeq 1-\frac{b}{m+1}$ from \citet{NeweyWest1994}, and similarly define the estimator of  $\widehat{\mathbf{\Gamma}}^{\dense}$ for the long run covariance kernel $\mathbf{\Gamma}^{\dense}$.

Now for studentization we only need consistency for the covariance kernel estimates, not the $\sqrt n$-rate, which will be satisfied also for the sparse sample, or sparsely observed time series. 
Details on rates of convergence for the covariance kernel under independence are presented in  \citet{berger2025}.
Their analysis can be extended to cross-covariance kernels under suitable assumptions on the time series, however, proving consistency for the estimate of the long-run covariance kernel is outside the scope of present paper. 
In our application the time series are $m$-dependent, so the consistency is also guaranteed for the long-run covariance kernel estimate. 
\end{remark}

We then obtain the following result. 

\begin{cor}\label{theorem:CLT_dtandard}
Under the assumptions of Theorem \ref{theorem:CLT_delta}, in addition assume that  $Z^{\sparse}(t) \neq \int Z^{\sparse}\,d\lambda$  respectively $Z^{\dense}(t) \neq \int Z^{\dense}\,d\lambda$ for each $t$. If the covariance kernel estimates $\widehat{\Gamma}^{\sparse}_{ n}$ and $\widehat{\Gamma}^{\dense}_{{ \tilde n}}$, see \eqref{eq:Gamma_estimator} are consistent in $C([0,1]^2)$, then if $n \llslim {\tilde n}$ we obtain 
\begin{align}
   \bigg(\sqrt{n}\,\frac{\hat{\Delta}_{\bign}(t;h_n,\tilde{h}_n) - \Delta(t)}{\big(\rP \circ\widehat{\Gamma}_n^{\sparse}(t,t;h_n)\big)^{1/2}}\bigg)_{t\in[0,1]}  \overset{D}{\longrightarrow} \mathcal{G}(0,\corS),\label{eq:std statistic 1}
\end{align}
with $\Delta$ and $\hat{\Delta}_{\bign}$ in \eqref{eq:centereddiff}, and where
\begin{align*}
    \corS(t,s)\defeq\frac{\rP \circ \Gamma^{\sparse}(t,s)}{\big(\rP \circ \Gamma^{\sparse}(t,t)\,\,\rP \circ \Gamma^{\sparse}(s,s)\big)^{1/2}}
\end{align*}
denotes the correlation function associated to the covariance kernel $\rP \circ \Gamma^{\sparse}$, and $\rP:C([0{,}1]^2) \rightarrow \R$ is the linear projection with $$(\rP \circ f)\, (t,s) \defeq f(t,s)- \int_0^1\, \big(f(t,y)+f(y,s)\,\big)\, \rd y  + \int_{[0{,}1]^2}f(x,y)\,\rd x \, \rd y,\qquad f\in C([0{,}1]^2).$$
     A similar statement holds if $n \simeq {\tilde n}$.
\end{cor}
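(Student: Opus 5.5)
The plan is to derive the corollary from Theorem \ref{theorem:CLT_delta}, part 2, via the continuous mapping theorem, a Slutsky argument in $C[0,1]$, and a continuity argument for studentization.

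\textbf{Step 1: centering map.} Define $L\colon C[0,1]\to C[0,1]$ by $(Lf)(t)=f(t)-\int_0^1 f(y)\,\rd y$. The map $L$ is linear and bounded, since $\|Lf\|_\infty\le 2\|f\|_\infty$. Moreover,
\[
\sqrt n\big(\hat\Delta_{\bign}-\Delta\big)=L\big(\sqrt n(\hat\delta_{\bign}-\delta)\big).
\]
By \eqref{eq:distinctorder} and the continuous mapping theorem, this converges in $C[0,1]$ to $L\mathcal G$, where $\mathcal G\sim\mathcal G(0,\Gamma^{\sparse})$. As a bounded linear image of a centered continuous Gaussian process, $L\mathcal G$ is again centered Gaussian. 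A direct Fubini computation gives its covariance:
\[
\expec\big[L\mathcal G(t)\,L\mathcal G(s)\big]
=\Gamma^{\sparse}(t,s)-\int_0^1\Gamma^{\sparse}(t,y)\,\rd y-\int_0^1\Gamma^{\sparse}(y,s)\,\rd y+\int_{[0,1]^2}\Gamma^{\sparse}(x,y)\,\rd x\,\rd y
=\rP\circ\Gamma^{\sparse}(t,s).
\]

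\textbf{Step 2: variance bounded away from zero.} We have $\rP\circ\Gamma^{\sparse}(t,t)=\mathrm{Var}\big(Z^{\sparse}(t)-\int Z^{\sparse}\,\rd\lambda\big)$. The added assumption $Z^{\sparse}(t)\neq\int Z^{\sparse}\,d\lambda$, read as "not almost surely equal", makes this variance strictly positive for every $t$. The function $t\mapsto\rP\circ\Gamma^{\sparse}(t,t)$ is continuous, because $\Gamma^{\sparse}$ is continuous: this follows from the Hölder condition \eqref{eq:Z} and the second moments of $V^{\sparse}$, via dominated convergence. A continuous positive function on the compact set $[0,1]$ attains a positive minimum, so $\inf_t\rP\circ\Gamma^{\sparse}(t,t)=:v_0>0$.

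\textbf{Step 3: consistency of the studentizer.} The projection $\rP$ is bounded on $C([0,1]^2)$ with $\|\rP f\|_\infty\le 4\|f\|_\infty$. Hence consistency of $\widehat\Gamma^{\sparse}_n$ in $C([0,1]^2)$ yields
\[
\sup_t\big|\rP\circ\widehat\Gamma^{\sparse}_n(t,t)-\rP\circ\Gamma^{\sparse}(t,t)\big|\to 0 \quad\text{in probability.}
\]
By Step 2, with probability tending to one $\inf_t \rP\circ\widehat\Gamma^{\sparse}_n(t,t)\ge v_0/2$. On this event, the map $g\mapsto g^{-1/2}$ is Lipschitz on $[v_0/2,\infty)$, so
\[
\sup_t\big|(\rP\circ\widehat\Gamma^{\sparse}_n(t,t))^{-1/2}-(\rP\circ\Gamma^{\sparse}(t,t))^{-1/2}\big|\to 0 \quad\text{in probability.}
\]

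\textbf{Step 4: Slutsky in $C[0,1]$.} The multiplication map $C[0,1]\times C[0,1]\to C[0,1]$, $(f,g)\mapsto fg$, is continuous. Combining Step 1 with Step 3 by Slutsky's lemma (joint convergence holds because the second component converges in probability to a deterministic limit), the studentized process converges to
\[
L\mathcal G(t)\,\big(\rP\circ\Gamma^{\sparse}(t,t)\big)^{-1/2}.
\]
This is a centered Gaussian process with covariance $\corS$, which proves \eqref{eq:std statistic 1}.

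\textbf{Case $n\simeq\tilde n$.} The argument is identical, starting from \eqref{eq:sameorder}. One replaces $\Gamma^{\sparse}$ by $\Gamma^{\sparse}+C\Gamma^{\dense}$ throughout and studentizes with $\rP\circ(\widehat\Gamma^{\sparse}_n+\tfrac{n}{\tilde n}\widehat\Gamma^{\dense}_{\tilde n})$. In the time series setting, the long-run kernels replace the covariance kernels.

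\textbf{Main obstacle.} The only delicate point is Step 2: making the non-degeneracy assumption precise as positivity of $\mathrm{Var}(Z^{\sparse}(t)-\int Z^{\sparse})$ for every $t$, and upgrading it to a uniform lower bound via continuity. Without that uniform bound, the map $g\mapsto g^{-1/2}$ is not uniformly continuous and the Slutsky step in Step 4 fails. The remaining steps are routine.
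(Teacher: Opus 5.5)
Your proposal is correct and follows the paper's own argument: the continuous mapping theorem for the centering map $f\mapsto f-\int_0^1 f$, then a functional Slutsky step for the studentization near the continuous, strictly positive function $t\mapsto \rP\circ\Gamma^{\sparse}(t,t)$, with the non-degeneracy assumption read as $\rP\circ\Gamma^{\sparse}(t,t)>0$. You make explicit what the paper leaves implicit, namely the covariance computation that yields $\rP\circ\Gamma^{\sparse}$ and the uniform lower bound $v_0>0$ obtained from continuity and compactness, which is what keeps $g\mapsto g^{-1/2}$ well behaved on the relevant neighborhood.
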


This follows from Theorem \ref{theorem:CLT_delta} by first applying the continuous mapping theorem for the map $f \mapsto f - \int f$, $f \in C[0,1]$, and then applying a standard functional version of Slutsky's theorem in $C([0,1])$ (e.g.~from Slutsky’s lemma as in  \citet{vaart1996}, Example 1.4.7, together with the continuous mapping theorem (\citet{vaart1996}, Theorem 1.11.1) applied to $(f,g) \mapsto f/g^{1/2}$ in a neighborhood of a continuous, positive function $g>0$.) 
The assumption $Z^{\sparse}(t) \neq \int Z^{\sparse}\,d\lambda$ is equivalent to $\rP\circ \Gamma^{\sparse}(t,t)>0$.
Further a similar result holds for time series in the setting of Remark \ref{assumption:tskernels} when consistency of the long-run covariance kernels is assumed.     

\begin{remark}[Construction of uniform confidence bands]
    Corollary \ref{theorem:CLT_dtandard} allows to construct uniform confidence bands for the centered difference $\Delta$ in \eqref{eq:centereddiff}.
    In case $n \llslim {\tilde n}$ for which \eqref{eq:std statistic 1} applies, given $\alpha \in (0{,}1)$ let $q_{1-\alpha}$ be the $(1-\alpha)$ quantile of the limit law in \eqref{eq:std statistic 1}. 
    Then setting
    \begin{align*}
        U^{\pm}_{\bign}(t;h,\tilde{h}) \defeq\hat{\Delta}_{\bign}(t;h,\tilde{h})\, \pm \, \frac{q_{1-\alpha}}{\sqrt{n}}\,\big(\rP \circ\widehat{\Gamma}^{\sparse}_{\!n}(t,t;{h})\big)^{1/2},
    \end{align*}
    we obtain
    \begin{equation}\label{eq:asympconfband}
        \prob\big( \Delta(t) \in \big[U^{-}_{\bign}(t;h,\tilde{h}), U^{+}_{\bign}(t;h,\tilde{h}) \big]\text{ for all } t \in [0{,}1]\big) \longrightarrow 1-\alpha.
    \end{equation}
\end{remark}

\begin{remark}[Two-sample problem]
    Corollary \ref{theorem:CLT_dtandard} allows to construct a test for a constant difference $\delta$ of the mean functions in \eqref{eq:main.4},  that is for $\Delta=0$ where $\Delta$ is given in \eqref{eq:centereddiff}. This hypothesis can be tested at asymptotic level $\alpha$ by checking whether the confidence band in \eqref{eq:asympconfband} contains the zero-function.  
\end{remark}

\section{Numerical implementation and simulations}\label{section:quantiles}

In this section we discuss the numerical implementation of our methods and conduct a simulation study. We use local quadratic weights in the linear estimators in \eqref{eq:estimatormu} and \eqref{eq:estimators}, which were computed using the \texttt{R} package \texttt{locpol}. The weights in the estimates of the covariance kernel functions in \eqref{eq:Gamma_estimator} are obtained by the restricted bivariate local polynomial weights of degree one from \citet{berger2025}. They are computed using the \texttt{biLocPol} \texttt{R} package\footnote{ contained in Github repository github.com/mbrgr/Optimal-Rates-Covariance-Kernel-Estimation-in-FDA \href{https://github.com/mbrgr/Optimal-Rates-Covariance-Kernel-Estimation-in-FDA}{[Link]}}. We extend this to calculate the weights in the estimates of the lagged covariance kernel functions in \eqref{eq:crosscov}. For bandwidth selection a k-fold cross validation is implemented for independent data. Furthermore considering times series a hv-block cross validation is adopted as suggested e.g.~in \citet{racine2000} and \citet{Roberts2017} which is the main method used in this section.
The \texttt{R}-code for the simulations, the data application as well as for computing lagged covariance kernel kernels are available in a Github repository\footnote{github.com/KevinWilk/Beyond-average-warming \href{https://github.com/KevinWilk/Beyond-average-warming}{[github]}}.

In Section \ref{subsection:multiplier bootstrap} for independent samples as well as for functional time series we discuss how to estimate the asymptotic quantiles as required in construction of confidence bands using the multiplier bootstrap. Then in Section \ref{subsection: Simulation} we present the results of a simulation study.
The numerical experiments in this and the next section were performed using the Marburg Compute Cluster (MaRC3a) at the Philipps-Universit\"at Marburg.% \href{https://www.uni-marburg.de/en/hrz/services/high-performance-computing}{[Link]}.

\subsection{Quantile estimation via the multiplier bootstrap}\label{subsection:multiplier bootstrap}

The multiplier bootstrap is a standard resampling tool, see e.g.~\citet{vaart1996} or \citet{Chernozhukov2013} for expositions and results. In the setting of functional data analysis, it has been used, e.g.~in \citet{chang2009bootstrapping}, \citet{Telschow2022} and \citet{dette2020}. 

To obtain good finite sample approximations we work with the setting $n \simeq {\tilde n}$ with $ n/\tilde n \approx C$ in \eqref{eq:sameorder}. With $\hat{\mu}^{\sparse}_{\bign}(\cdotp;h,\tilde{h})\defeq \hat{\delta}_{\bign}(\cdotp;h,\tilde{h})+\hat{\mu}^{\dense}_{\tilde n}(\cdotp;\tilde{h})$ setting
\begin{align*}
    \hat{X}^{\sparse}_{n,i}(t)&\defeq \sum_{j=1}^p w^{\sparse}_{j}(t;{h})\big(Y^{\sparse}_{i,j}-\hat{\mu}^{\sparse}_{\bign}(t;h,\tilde{h})\big) \quad \text{ and }\\
    \hat{X}^{\dense}_{{\tilde n},k}(t) &\defeq  \sum_{j=1}^p w^{\sparse}_{j}(t;{h}) \sum_{l=1}^{{\tilde p}}w^{\dense}_{l}(t^{\sparse}_{j};{\tilde h})\big(Y^{\dense}_{k,l}-\hat{\mu}^{\dense}_{{\tilde n}}(t; \tilde h) \big),
\end{align*}
the multiplier bootstrap generates samples $r=1, \ldots, N^*$ for the asymptotic limit distribution in Corollary \ref{theorem:CLT_dtandard} (situation $n \simeq {\tilde n}$) according to 
\begin{align}
    B^{(r)}(t)\defeq &\frac{\sum_{i=1}^n\big(\hat{X}^{\sparse}_{n,i}(t)-\int_0^1 \hat{X}^{\sparse}_{n,i}(y)\,dy\big)\,\xi_{i}^{(r)}+n/{\tilde n}\, \sum_{k=1}^{{\tilde n}}\big(\hat{X}^{\dense}_{{\tilde n},k}(t)-\int_0^1 \hat{X}^{\dense}_{{\tilde n},k}(y)\,dy\big)\,\eta_{k}^{(r)}}{\sqrt{n}\,\big(\rP \circ (\widehat{\mathbf{\Gamma}}^{\sparse}(t,t;\bigh,m) +n/{\tilde n}\,\widehat{\mathbf{\Gamma}}^{\dense}(t,t;\tilde{\bigh},\tilde{m}))\big)^{1/2}}. \label{eq:multiboot}
\end{align}

Here in \eqref{eq:multiboot}, in case of independent samples the $\xi_{i}^{(r)}$ and the $\eta_{k}^{(r)}$ are chosen as independent, identically distributed standardized random variables e.g. normally distributed or Rademacher distributed. For time series, the multipliers need to have certain dependencies. In our implementation we follow the recommendations in  \citet{Buecher2013}: Setting  
\begin{align*}
    \kappa_1(b;n) \defeq \begin{cases}
        \frac{1}{2\,l(n) -1}, &\text{ if } |b| < l(n),\\
        0,&\text{ otherwise,}
    \end{cases}  \quad \text{ and } \quad \kappa_2(b;n) \defeq \max\Big(0,\frac{1-|b|/l(n)}{l(n)}\Big)
\end{align*}
we simulate
\begin{align}
    \xi_{i,n}^{(r)}(j) \defeq \sum_{b = -\infty}^{\infty} \kappa_j(b;n)\, W^{(r)}_{i+b} \quad \text{and} \quad \eta^{(r)}_{k,\tilde n}(j) \defeq \sum_{b = -\infty}^{\infty} \kappa_j(b;\tilde n)\, \tilde W^{(r)}_{k+b},\label{eq:taperedmultipliers}
\end{align}
where $W^{(r)}_i$ and  $\tilde W^{(r)}_k$ are independently normally distributed as $W^{(r)}_i \overset{i.i.d.}{\sim} \mathcal{N}(0,1/\sqrt{q(n)})$ and $\tilde W^{(r)}_k \overset{i.i.d.}{\sim} \mathcal{N}(0,1/\sqrt{q(\tilde n)})$ with $q(n) \defeq 1/(2\,l(n)-1).$ Qualitatively similar to \citet{Buecher2013} we set $l(n) = \lfloor 2\,n^{1/3}\rfloor$ and scale \eqref{eq:multiboot} with $\sqrt{n-1}$ instead of $\sqrt{n}$ as recommended in \citet{Telschow2022}.

\subsection{Simulation results}\label{subsection: Simulation}

As data generating process we consider a family of stationary Ornstein–Uhlenbeck (OU) processes with mean levels modeled by an AR(1)  process. 
More precisely, for parameters $\sigma,\theta>0$ we set 
\begin{align*}
    Z^{\sparse}_{i}(t) &= Z^{\sparse}_i(0) e^{-\theta t}  + \sigma \int_0^t e^{-\theta(t-s)} \rd B^{\sparse}_{i;s},\\
    Z^{\dense}_{k}(t) &= Z^{\dense}_k(0) e^{-\theta t}  + \sigma \int_0^t e^{-\theta(t-s)} \rd B^{\dense}_{k;s},
\end{align*}
with $Z^{\sparse}_i(0),Z^{\dense}_k(0) \sim \mathcal{N}(0,\sigma^2/(2\theta))$ and $(B^{\sparse}_{i;s})_{s\geq 0},(B^{\dense}_{k;s})_{s\geq 0}$ being standard Brownian motions with
\begin{align*}
    \cov(B^{\sparse}_{i;t},B^{\sparse}_{i+b;s}) = \cov(B^{\dense}_{k;t},B^{\dense}_{k+b;s}) = \rho^b_B\,\min(t,s),
\end{align*} 
for $|\rho_B|< 1$ and $i,k \in \N$.  
Their lagged covariance kernels are given by
\begin{align*}
    \Gamma^{\sparse}(t,s;|i-j|) = \Gamma^{\dense}(t,s;|i-j|) =   \frac{\sigma^2}{2 \theta} \Big(\rho_B^{|i-j|}\Big( e^{-\theta|t-s|}-e^{-\theta(t+s)} \Big) + \one_{i=j} \,e^{-\theta(t+s)} \Big). %\frac{\tau^2\,\rho_\nu^{|i-j|}}{1-\rho_\nu^2}
\end{align*}
Since $|\rho_B|< 1$ their long-run covariance kernels exist, i.~e.~$ ||\mathbf{\Gamma}^\sparse||_{\infty}, ||\mathbf{\Gamma}^\dense||_{\infty} < \infty.$ 
In the simulation we take $\rho_B = 0{.}5$, $\theta=1$, $\sigma = 4$ and let $\epsilon_{i,j}^{\sparse},\epsilon_{k,l}^{\dense} \sim \mathcal{N}(0,\sigma_{\epsilon})$ with $\sigma_{\epsilon}=0{.}1$, and
\begin{align*}
    &\mu^{\sparse}(t)\defeq 3\sin\big(1{.}5\,\pi(2t-1)\big)e^{-2|2t-1|},\\
   & \mu^{\dense}(t) = \mu^{\sparse}(t)-\delta(t),
\end{align*}
with $\delta$ specified below. Furthermore, for the long run covariance kernels we choose a maximal lags of $m, \tilde{m} = 3$. In both following subsections, we set $n/\tilde{n} = C=5/6$.

\subsubsection{Size of the test for $\Delta=0$ and coverage of confidence bands if $\Delta=0$}
In this setting, we choose
\begin{align*}
\delta \equiv 2\in \R,
\end{align*}
and consider equidistant grid points at $t^{\sparse}_{j} = (j - 1/2)/p$ for $j = 1,\hdots,p$, respectively $t^{\dense}_{l} = (l - 1/2)/\tilde{p}$ for $l = 1,\hdots,\tilde{p}.$

For $N=1{,}000$ sample repetitions Table \ref{tab:bandwidth mean delta constant} provides averaged results of the hv-block cross validation in a $5$-fold framework applied on $(Y^{\sparse}_{i,j}, t^{\sparse}_{j})$ and $(Y^{\dense}_{k,l},t^{\dense}_l)$ with $n\in\{25,50,100,200,300,400\}$ and $\tilde{n}\in\{30,60,120,240,360,480\}$ random curves as well as a design grid of $p\in \{25,50,75\}$ and $\tilde p = 100$. The bandwidth selection is discussed further in the Section \ref{section: hv cross validation} of the supplementary appendix. With an increasing number of curves, the selected bandwidth for estimating $\mu^{\dense}$ decreases, whereas the bandwidth for estimating $\delta$ exhibits only a slight decrease as the design points increase. Bandwidths selected for estimating the long run covariance kernel by $5$-fold hv-block cross validation are displayed in Figure \ref{fig:combined} in the Appendix, Section \ref{sec:additionalsim}.
\begin{table}
    \centering{
    \scriptsize
    \renewcommand{\arraystretch}{1.3}
    \begin{tabular}{c|c|c|c|c|c|c}
         $\bar{\tilde{h}}_{N}^{\text{\footnotesize cv}}$ & $\,\tilde n = 30\,$  & $\,\tilde n = 60\,$ & $\tilde n = 120$ & $\tilde n = 240$ & $\tilde n = 360$ & $\tilde n = 480$ \\
        \hline

        \rule{0pt}{2.5ex}$\tilde{p} = 100$ & $0{.}32$ & $0{.}22$ & $0{.}19$ & $0{.}18$ & $0{.}17$ & $0{.}16$ \\
        \hline
        & & & & &\\
        $\bar{h}_{N}^{\text{\footnotesize cv}}$ & $\,n = 25\,$ & $\,n = 50\,$ & $n = 100$ & $n = 200$ & $n = 300$ & $n = 400$ \\
        \hline

        \rule{0pt}{2.5ex} $p = 75$ & $0{.}55$ & $0{.}54$ & $0{.}55$ & $0{.}52$ & $0{.}54$ & $0{.}53$ \\

        \rule{0pt}{2.5ex} $p = 50$ & $0{.}57$ & $0{.}56$ & $0{.}54$ & $0{.}55$ & $0{.}56$ & $0{.}57$ \\

        \rule{0pt}{2.5ex} $p = 25$ & $0{.}57$ & $0{.}55$ & $0{.}55$ & $0{.}56$ & $0{.}55$ & $0{.}56$ \\
        
    \end{tabular}
    }
    \caption{Bandwidth selection for $\hat{\mu}^{\dense}_{\tilde n}(\cdotp;\bar{\tilde{h}}_{N}^{\text{\footnotesize cv}})$ and $\hat{\delta}_{\bign}(\cdotp;\bar{h}_{N}^{\text{\footnotesize cv}},\bar{\tilde{h}}_{N}^{\text{\footnotesize cv}})$, mean of $N=1{,}000$ repetitions.}
    \label{tab:bandwidth mean delta constant}
\end{table}

With $n=25,p=25$ and $\tilde n = 30,\tilde p =100$ and bandwidths chosen from Table \ref{tab:bandwidth mean delta constant}, Figure \ref{fig:Simulation Setup Delta constant} illustrates a data setup by displaying the simulated curves and corresponding local polynomial fits as well as the underlying mean functions, respectively difference function $\delta$.
\begin{figure}[h!]
    \centering
    \begin{subfigure}{0.32\linewidth}
        \includegraphics[width=\linewidth]{Simulation_setup/Delta_=_0/constant_TRUE_simulation_setup_dense.png}
        \caption{Dense regime}
    \end{subfigure}
    \hfill
     \begin{subfigure}{0.32\linewidth}
        \includegraphics[width=\linewidth]{Simulation_setup/Delta_=_0/constant_TRUE_simulation_setup_delta.png}
        \caption{Sparse regime}
    \end{subfigure}
    \hfill
    \begin{subfigure}{0.32\linewidth}
        \includegraphics[width=\linewidth]{Simulation_setup/Delta_=_0/constant_TRUE_simulation_setup_sparse.png}
        \caption{Sparse regime}
    \end{subfigure}
    \caption{Estimator: (a) $\hat{\mu}^{\dense}_{\tilde n}(\cdotp;0{.}32)$, (b) $\hat{\delta}_{\bign}(\cdotp;0{.}57,0{.}32)$ and (c) $\hat{\mu}^{\sparse}_{\bign}(\cdotp;0{.}57,0{.}32)\defeq \hat{\delta}_{\bign}(\cdotp;0{.}57,0{.}32)+\hat{\mu}^{\dense}_{\tilde n}(\cdotp;0{.}32)$  (red dashed line), function: (a) $\mu^{\dense}$, (b) $\delta$ and (c) $\mu^{\sparse}$ (black line).} \label{fig:Simulation Setup Delta constant}
\end{figure}

Next, based on standardized asymptotic normality in \eqref{eq:sameorder} and similar in \eqref{eq:std statistic 1} in case of time series we use the dependent multiplier bootstrap with $N^*=1{,}000$ to obtain estimates of those $(1-\alpha)$-quantiles, denoted by $\hat{q}_{1-\alpha}^{\text{\tiny DMB}}$ for $\alpha \in (0{,}1)$. Figure \ref{fig:coverage rate under H0} shows the empirical coverages of $1{,}000$ repeated simulation setups with each parameter combination of $n\in\{25,50,75,100,150,200,250,200,250,300,350,400\}$, $p\in \{25,50,75\}$ as well as $\tilde{n} = 1{.}2 \, n$ and fixed $\tilde p = 100$. The nominal levels are kept reasonably well, in particular for the larger sample sizes. 
\begin{figure}[h!]
    \centering
    \begin{subfigure}[b]{0.95\linewidth}
        \includegraphics[width=\linewidth]{Simulation_setup/Delta_=_0/coverage_rate_H0.png}
    \end{subfigure}
    \caption{Constant difference function: Simulated coverage and level of the test using the  dependent multiplier bootstrap with $N^* = 1{,}000$ on $1{,}000$ samples based on the time-series version of \eqref{eq:sameorder}.}
    \label{fig:coverage rate under H0}
\end{figure}

\subsubsection{Power of the test and coverage of confidence bands for $\Delta \neq 0$}

In the alternative scenario we consider the difference function
\begin{align*}
    \delta(t)\defeq 2-\sin\big(\pi(2t-1)\big)e^{-2|2t-1|}.
\end{align*}
Tables \ref{tab:bandwidth mean delta not constant} and \ref{tab:bandwidth cov delta not constant} as well as Figure \ref{fig:Simulation Setup Delta not constant} are analogous to those in the previous subsection. Again the selected bandwidths are larger for the  difference function than for $\mu^{\dense}$. %0Furthermore, the estimated function has no recognizable impact on the choice of bandwidth from Table \ref{tab:bandwidth cov delta not constant} compared to \ref{fig:combined}(a), which is consistent with the results from \citet{berger2025}.

\begin{table}[h!]
    \centering{
    \scriptsize
    \renewcommand{\arraystretch}{1.3}
    \begin{tabular}{c|c|c|c|c|c|c}
         $\bar{\tilde{h}}_{N}^{\text{\footnotesize cv}}$ & $\,\tilde n = 30\,$  & $\,\tilde n = 60\,$ & $\tilde n = 120$ & $\tilde n = 240$ & $\tilde n = 360$ & $\tilde n = 480$ \\
        \hline

        \rule{0pt}{2.5ex}$\tilde{p} = 100$ & $0{.}26$ & $0{.}2$ & $0{.}17$ & $0{.}16$ & $0{.}15$ & $0{.}14$ \\
        \hline
        & & & & &\\
        $\bar{h}_{N}^{\text{\footnotesize cv}}$ & $\,n = 25\,$ & $\,n = 50\,$ & $n = 100$ & $n = 200$ & $n = 300$ & $n = 400$ \\
        \hline

        \rule{0pt}{2.5ex} $p = 75$ & $0{.}55$ & $0{.}46$ & $0{.}37$ & $0{.}3$ & $0{.}27$ & $0{.}25$ \\

        \rule{0pt}{2.5ex} $p = 50$ & $0{.}56$ & $0{.}47$ & $0{.}39$ & $0{.}32$ & $0{.}27$ & $0{.}25$ \\

        \rule{0pt}{2.5ex} $p = 25$ & $0{.}55$ & $0{.}48$ & $0{.}38$ & $0{.}31$ & $0{.}28$ & $0{.}26$ \\
        
    \end{tabular}
    }
    \caption{Bandwidth selection for $\hat{\mu}^{\dense}_{\tilde n}(\cdotp;\bar{\tilde{h}}_{N}^{\text{\footnotesize cv}})$ and $\hat{\delta}_{\bign}(\cdotp;\bar{h}_{N}^{\text{\footnotesize cv}},\bar{\tilde{h}}_{N}^{\text{\footnotesize cv}})$, mean of $N=1{,}000$ repetitions.}
    \label{tab:bandwidth mean delta not constant}
\end{table}

\begin{figure}[h!]
    \centering
    \begin{subfigure}{0.32\linewidth}
        \includegraphics[width=\linewidth]{Simulation_setup/Delta__=_0/constant_FALSE_simulation_setup_dense.png}
        \caption{Dense regime}
    \end{subfigure}
    \hfill
     \begin{subfigure}{0.32\linewidth}
        \includegraphics[width=\linewidth]{Simulation_setup/Delta__=_0/constant_FALSE_simulation_setup_delta.png}
        \caption{Sparse regime}
    \end{subfigure}
    \hfill
    \begin{subfigure}{0.32\linewidth}
        \includegraphics[width=\linewidth]{Simulation_setup/Delta__=_0/constant_FALSE_simulation_setup_sparse.png}
        \caption{Sparse regime}
    \end{subfigure}
    \caption{Estimator: (a) $\hat{\mu}^{\dense}_{\tilde n}(\cdotp;0{.}26)$, (b) $\hat{\delta}_{\bign}(\cdotp;0{.}55,0{.}26)$ and (c) $\hat{\mu}^{\sparse}_{\bign}(\cdotp;0{.}55,0{.}26)\defeq \hat{\delta}_{\bign}(\cdotp;0{.}55,0{.}26)+\hat{\mu}^{\dense}_{\tilde n}(\cdotp;0{.}26)$  (red dashed line), function: (a) $\mu^{\dense}$, (b) $\delta$ and (c) $\mu^{\sparse}$ (black line).}  \label{fig:Simulation Setup Delta not constant}
\end{figure}

Similar to Figure \ref{fig:coverage rate under H0}, under a non constant difference function $\delta$ the empirical coverage rates and the empirical powers are plotted in Figure \ref{fig:coverage and power}. The coverage is slightly below that for a constant difference but still reasonably close to the nominal level. 

\begin{figure}[h!]
    \centering
    \begin{subfigure}[b]{0.75\linewidth}
        \includegraphics[width=\linewidth]{Simulation_setup/Delta__=_0/coverage_rate_H1.png}
    \end{subfigure} 
    
    \hfill
    
    \begin{subfigure}[b]{0.75\linewidth}
        \includegraphics[width=\linewidth]{Simulation_setup/Delta__=_0/power_H1.png}
    \end{subfigure}
    
    \caption{Non-constant difference function: Simulated coverage of the confidence band (upper panel), and power of test for constant difference (lower panel) using the dependent multiplier bootstrap with $N^* = 1{,}000$ on $1{,}000$ repeated simulated samples analogously to Figure \ref{fig:coverage rate under H0}.}
    \label{fig:coverage and power}
\end{figure}

Figure \ref{fig:comparison} highlights in one simulated sample why the implementation of dependent multipliers is necessary compared to independent ones for the multiplier bootstrap in \eqref{eq:multiboot}: In the setting $n = 400, p = 25$ and $\tilde{n} = 480, \tilde{p} = 100$ the $95\%$-quantiles are considerably underestimated using  independent multipliers, resulting in $(\hat{q}^{\text{\tiny IMB}}_{0{.}95}=2{.}24 \text{ (not centered) and } \hat{q}^{\text{\tiny IMB}}_{0{.}95}=1{.}88 \text{ (centered)})$. For dependent multipliers, the estimated quantiles $(\hat{q}^{\text{\tiny DMB}}_{0{.}95}=2{.}66 \text{ (not centered) and } \hat{q}^{\text{\tiny DMB}}_{0{.}95}=2{.}98 \text{ (centered)})$ are closer to the empirical quantiles $(\hat{q}_{0{.}95}=3{.}01 \text{ (not centered) and } \hat{q}_{0{.}95}=3{.}39 \text{ (centered)})$ obtained from $1{,}000$ simulation repetitions. 

\begin{figure}[h!]
    \centering
    \begin{subfigure}[b]{0.49\linewidth}
        \includegraphics[width=\linewidth]{Simulation_setup/Delta__=_0/Compare_CB_not_centered.png}
    \end{subfigure}
    \hfill
    \begin{subfigure}[b]{0.49\linewidth}
        \includegraphics[width=\linewidth]{Simulation_setup/Delta__=_0/Compare_CB_centered.png}
    \end{subfigure}
\caption{Dependent and independent multiplier bootstrap: Comparison of confidence bands at level $95$\% with $N^*=1{,}000$ based on one simulated sample.}
    \label{fig:comparison}
\end{figure}

\section{Comparing current and historical daily temperature time series}\label{section:Real data}

In this section we follow up on Section \ref{sec:apllilust} and present results for data for the three further major German cities Munich, Hamburg and Frankfurt am Main to highlight the distinct changes in the daily mean profile according to the climate region of the city. Again, for simplifying interpretation, as in \eqref{eq:diffappl} the difference function $\delta$ is defined by
$\delta = \mu^{\dense}-\mu^{\sparse}.$ Thus 
$\delta>0$ indicates an average increase in daily weather temperature over time.

Figures \ref{fig:Muenchen} (Munich), \ref{fig:Hamburg} (Hamburg) as well as \ref{fig:Frankfurt} (Frankfurt) show plots analogous to those in Figures \ref{fig: means of Berlin} - \ref{fig:real data constant test} for Berlin. 

\begin{figure}[h!]
    \centering
    \begin{subfigure}[b]{\linewidth}
        \centering
        \includegraphics[width=0.95\linewidth]{Application/Munich/means_Munich.png}
    \end{subfigure}
    \begin{subfigure}[b]{\linewidth}
        \centering
        \includegraphics[width=0.95\linewidth]{Application/Munich/difference_Munich.png}
    \end{subfigure}
    \hfill
    \begin{subfigure}[b]{\linewidth}
        \centering
        \includegraphics[width=0.95\linewidth]{Application/Munich/centered_difference_Munich.png}
    \end{subfigure}
\caption{Munich (Germany): Results analogously to Figures \ref{fig: means of Berlin}, \ref{fig:real data delta equals 0 test} and \ref{fig:real data constant test}.}
    \label{fig:Muenchen}
\end{figure}

\begin{figure}[h!]
    \begin{subfigure}[b]{\linewidth}
        \centering
        \includegraphics[width=0.95\linewidth]{Application/Hamburg/means_Hamburg.png}
    \end{subfigure}
    \begin{subfigure}[b]{\linewidth}
            \centering
        \includegraphics[width=0.95\linewidth]{Application/Hamburg/difference_Hamburg.png}
    \end{subfigure}
    \begin{subfigure}[b]{\linewidth}
        \centering
        \includegraphics[width=0.95\linewidth]{Application/Hamburg/centered_difference_Hamburg.png}
    \end{subfigure}
\caption{Hamburg (Germany): Results analogously to Figures \ref{fig: means of Berlin}, \ref{fig:real data delta equals 0 test} and \ref{fig:real data constant test}.}
    \label{fig:Hamburg}
\end{figure}

\begin{figure}[h!]
    \centering
    \begin{subfigure}[b]{\linewidth}
        \centering
        \includegraphics[width=0.95\linewidth]{Application/Frankfurt_am_Main/means_Frankfurt_Main.png}
    \end{subfigure}
    
    \begin{subfigure}[b]{\linewidth}
        \centering
        \includegraphics[width=0.95\linewidth]{Application/Frankfurt_am_Main/difference_Frankfurt_Main.png}
    \end{subfigure}
    
    \begin{subfigure}[b]{\linewidth}
        \centering
        \includegraphics[width=0.95\linewidth]{Application/Frankfurt_am_Main/centered_difference_Frankfurt_Main.png}
    \end{subfigure}

\caption{Frankfurt am Main (Germany): Results analogously to Figures \ref{fig: means of Berlin}, \ref{fig:real data delta equals 0 test} and \ref{fig:real data constant test}.}
    \label{fig:Frankfurt}
\end{figure}

For all months in all cities, the daily mean temperature pattern deviates significantly from the daily average, as seen in the last row of plots in Figures \ref{fig:Muenchen} - \ref{fig:Frankfurt}, where the narrower inner band does not contain the constant line zero. However, for temperatures in Hamburg with a maritime climate, as well as Munich with pre-alpine climate, the changes in the daily mean temperature pattern are less pronounced than in Berlin or in Frankfurt. In Figure \ref{fig:tropical} we display for each city and each month the average daily change, together with the difference of the maxima of the mean curves as well as the difference of the minima to highlight changes in the extremes of the mean curves. In Berlin as well as in Frankfurt, the difference in maxima is mostly above the average difference, which in turn is above the  difference in minima, in particular during the summer months.  However, for Hamburg and Munich the picture is less clear cut, with the difference in minimal temperatures being much higher than those of the maximal temperatures  during the autumn months. 

\begin{figure}[h!]
    \centering
        \includegraphics[width=\linewidth]{Application/all_difference_results.png}
    \caption{Average difference over the day (solid red) with difference of maxima of mean curves (green +) as well as difference of minima (blue -).}
    \label{fig:tropical}
\end{figure}

\section{Conclusion and discussion}\label{sec:discuss}

In view of global warming, the increase in average daily temperatures that we observe in the temperature time series is to be expected. However, our analysis brings to light  that the daily mean  temperature pattern in each month also changed beyond the positive offset. While this is consistently observed for all four cities, the change in pattern differs somewhat according to the climate region in which the city is located. A disproportionately high increase in the minimal daily mean temperature, e.g.~of night temperatures, is not consistently observed, rather, at least in the more continental climate zones, an above-average increase in the maximal temperature occurs in particular during the summer months. An extension of our analysis to weather stations in southern parts of Europe, which could be more effected by the often observed phenomenon of an increase in tropical nights, would be of some interest.   

On the methodological side, further parameters apart from the mean functions such as the covariance kernel or long-run covariance kernel, or associated basis functions could be the object of interest in two-sample statistics. 
Also, the application of the dependent multiplier bootstrap procedure used in the present paper requires extensive computational resources. A computationally more efficient alternative could be based on the Kac–Rice formula as used e.g.~in \citet{liebl2023}.  

\section*{Acknowledgements}

KW and HH gratefully acknowledge financial support from the DFG, grant HO 3260/9-1.

\vspace{0.5in}

% $ biblatex auxiliary file $
% $ biblatex bbl format version 3.3 $
% Do not modify the above lines!
%
% This is an auxiliary file used by the 'biblatex' package.
% This file may safely be deleted. It will be recreated by
% biber as required.
%

\appendix

\section{Assumptions on the weights}\label{ass:genericweights}

\begin{assumption}[Weights of linear estimators]\label{assumption:weights}
    For sufficiently large $p_{0},\tilde{p}_0 \in \N$ and therefore sufficiently small $h_{0},\tilde{h}_0 > 0$, the following assumptions are valid for all $ {h} \in (c/{p}, h_{0}]$ with ${p} \geq p_{0}$ respectively $ {\tilde h} \in (\tilde{c}/{\tilde p}, \tilde{h}_0]$ with ${\tilde p} \geq \tilde{p}_0$, where $c,\tilde{c} > 0$ are sufficiently large constants.
    In the following, $C_{1},\tilde{C}_1,C_{2},\tilde{C}_2 > 0$ are constants that do not depend on $\bign, p ,\tilde{p},h$ and $\tilde{h}$.
    \begin{enumerate}[label=(W\arabic*)]
        \item\label{item:W1} For $t \in T$, the weights form polynomials with appropriately selected degrees $ d\geq 1,$
        \begin{align*}
            \sum_{j=1}^{{p}} w^{\sparse}_{j}(t;{h}) &= 1 \quad \text{ and } \quad \sum_{j=1}^{{p}} w^{\sparse}_{j}(t;{h})\,(t^{\sparse}_{j} - t)^{\gamma} = 0 \quad \text{ for } \gamma=1,\hdots,d.
        \end{align*}
    Analogously, the above applies to $w^{\dense}_{l}(t;{\tilde h})$ for $\gamma=1,\hdots,\tilde{d}$.
    \item\label{item:W2}  For $t \in T$, the $j$-th weight disappears outside a small range around $t^{\sparse}_{j}$, respectively the $l$-th weight disappears outside a small range around $t^{\dense}_l$. For $j=1,\hdots,{p}$ and  $l=1,\hdots,{\tilde p}$ this means
    \begin{align*}
        w^{\sparse}_{j}(t;{h}) = 0 \quad \text{ if }  |t^{\sparse}_{j} - t| > {h} \quad \text{and} \quad w^{\dense}_{l}(t;{\tilde h}) \quad \text{ if }  |t^{\dense}_l - t| > {\tilde h}.
    \end{align*}
    \item\label{item:W3}For $t \in T$, the absolut value of the weights is limited by a constant $C_{1}>0$, respectively by $\tilde{C}_1$: $$\underset{1\leq j \leq p}{\sup} |w^{\sparse}_{j}(t;{h})| \leq \frac{C_{1}}{{p}{h}} \quad \text{and} \quad \underset{1\leq l \leq {\tilde p}}{\sup} |w^{\dense}_{l}(t;{\tilde h})| \leq \frac{\tilde{C}_1}{{\tilde p}{\tilde h}}.$$
    \item\label{item:W4}  For a Lipschitz constant $C_{2}>0$ and for  $t,s \in T$ it holds that
    \begin{align*}
        |w^{\sparse}_{j}(t;{h})-w^{\sparse}_{j}(s;{h})| &\leq \frac{C_{2}}{{p}{h}}\min\bigg(\frac{|t-s|}{{h}},1\bigg) %\quad \text{and} \\
        %|w^{\dense}_{l}(t;{\tilde h})-w^{\dense}_{l}(t;{\tilde h})| &\leq \frac{\tilde{C}_2}{{\tilde p}{\tilde h}}\min\bigg(\frac{|t-s|}{{\tilde h}},1\bigg).
    \end{align*}
    Analogously, the above applies to $w^{\dense}_{l}(t;{\tilde h})$ for some constant $\tilde{C}_2>0$.
    \end{enumerate}
\end{assumption}

\begin{assumption}[Design Assumption]\label{assumption: design}  
    For $t \in T$ and $C_{3},\tilde{C}_3>0,$ the numbers of non-zero weights are bounded
    \begin{align*}
        J^{\sparse}_{{p},{h}}(t) \defeq \{j \,|\, w^{\sparse}_{j}(t;{h}) \neq 0 \text{ for }j=1,\hdots,{p}\} \leq C_{3}\, {p}{h} \quad \text{and}\\
        J^{\dense}_{{\tilde p},{\tilde h}}(t) \defeq \{l \,|\, w^{\dense}_{l}(t;{\tilde h}) \neq 0 \text{ for }l=1,\hdots,{\tilde p}\}\leq \tilde{C}_3\, {\tilde p}{\tilde h}.
    \end{align*}
\end{assumption}

We notice that property \ref{item:W3} together with Assumption \ref{assumption: design} implies \ref{item:W5}:
\begin{enumerate}[label=(W5)]
\item\label{item:W5}  For $t \in[0{,}1]$, the sum of absolute values of the weights is limited by a constant $C_{4}>0$: $$\sum_{j= 1}^p| w^{\sparse}_j(t;{h})| \leq C_{4}.$$
\end{enumerate}

\section{Proof of Theorem \ref{theorem:CLT_delta}}
\begin{proof}[Proof of Theorem \ref{theorem:CLT_delta}]
    
By construction of the estimator, we have that
\begin{align}    
    \hat{\delta}_{\bign}(t;h) - \delta(t) =&  \sum_{j=1}^{{p}} w^{\sparse}_{j}(t;{h})\,\big(\delta(t^{\sparse}_{j})-\delta(t) \big)+  \sum_{j=1}^{{p}} w^{\sparse}_{j}(t;{h})\,\big( \bar{\epsilon}^{\sparse}_{j,n} + \bar Z^{\sparse}_n(t^{\sparse}_{j}) \big) \label{eq:erroronesparse} \\
    & - \sum_{j=1}^{{p}} w^{\sparse}_{j}(t;{h})\,\big(\hat \mu^{\dense}_{\tilde n}(t^{\sparse}_{j}; \tilde h)-\mu^{\dense}(t^{\sparse}_{j})\big)\label{eq:errortwodense}
\end{align}
where 
\begin{align*}
    \bar{\epsilon}^{\sparse}_{j,n}\defeq  \frac{1}{{n}}\sum_{i=1}^{{n}}\epsilon^{\sparse}_{i,j},\qquad \bar{Z}^{\sparse}_{j,n}\defeq  \frac{1}{{n}}\sum_{i=1}^{{n}}\, Z^{\sparse}_{i,j}.
\end{align*}

\bigskip

\textit{Proof of 1.}\qquad Now the first part in \eqref{eq:erroronesparse} is the estimation error in the sparse observational model \eqref{eq:main.4}, but with $\mu^{\sparse}$ replaced by $\delta$. From \citet{berger2024dense} we obtain that the sup-norm estimation error in \eqref{eq:erroronesparse} is upper-bounded of order $\max\big(h^{\alpha_\delta}, n^{-1/2}, (\log(1/h)/(n\, p\, h))^{\alpha_\delta / (2 \alpha_\delta + 1} \big)$. For \eqref{eq:errortwodense}, using \ref{item:W5} and again the bound from \citet{berger2024dense}, now in the dense observational model in \eqref{eq:main.4} we obtain
\begin{align*}
 \sup_{t \in [0,1]} \Big| \sum_{j=1}^{{p}} w^{\sparse}_{j}(t;{h})\,\big(\hat \mu^{\dense}_{\tilde n}(t^{\sparse}_{j}; \tilde h)-\mu^{\dense}(t^{\sparse}_{j}\big)\Big| & \leq \big\|\hat \mu^{\dense}_{\tilde n}(\cdot ; \tilde h)-\mu^{\dense}\big\|\,  \sup_{t \in [0,1]} \sum_{j=1}^{{p}} \big|  w^{\sparse}_{j}(t;{h})\,\big|  \\
 & = \mathcal O \Big(\max\big(\tilde h^{\alpha}, \tilde n^{-1/2}, (\log(1/\tilde h)/(\tilde n\, p\, \tilde h))^{\alpha / (2 \alpha + 1} \big) \Big).
 \end{align*}
 \eqref{eq:convrateh} follows since by assumption $n \lesssim \tilde n$, and the rate in \eqref{eq:convrateh} is obtained by plugging in the choices of the bandwidths $h$ and $\tilde h$. 

\bigskip

\textit{Proof of 2., \eqref{eq:distinctorder} in case $n\llslim {\tilde n}$.}

\smallskip

Under the assumptions in the theorem on $p,\tilde p$, $n, \tilde n$ and the choice of the bandwidths $h$ and $\tilde h$, it follows that the sup-norm of the term in \eqref{eq:errortwodense} is of order $\mathcal O_{\prob} (\tilde n^{-1/2})$, and thus $o_{\prob}(1)$ when multiplied with $\sqrt n$. Then Theorem 3 in \citet{berger2024dense} implies the asymptotic normality with limit $\mathcal{G}(0,\Gamma^{\sparse})$ for the first term in \eqref{eq:erroronesparse} and hence \eqref{eq:distinctorder} by Slutsky's theorem. 

\bigskip

\textit{Proof of 2., \eqref{eq:sameorder} in case $n \simeq {\tilde n}$ with $ n/\tilde n \sim C$.}

\smallskip

In this scenario the process term from \eqref{eq:errortwodense} will also contribute to the limit distribution. We set
\begin{align*}
    X^{\dense}_{\tilde{n},k}(x)= \sum_{j=1}^{p}\sum_{l=1}^{\tilde{p}}w^{\sparse}_{j}(x;h)\,w^{\dense}_{l}(t^{\sparse}_{j};\tilde{h})\big(Z^{\dense}_k(t^{\dense}_l)
\end{align*}
and proceed to check the conditions i) to v) of \citet{pollard}, Theorem 10.6. Then asymptotic normality of 
\begin{align*}
    S^{\dense}_{\tilde{n}}(t) \defeq \frac{1}{\sqrt{\tilde{n}}}\,  \sum_{k=1}^{\tilde{n}}  \,X^{\dense}_{\tilde{n},k}(t).
\end{align*} 
 with Gaussian limit $\mathcal{G}(0,\Gamma^{\dense})$ follows, and we can conclude that 
\begin{align}
     \frac{\sqrt{n}}{\tilde n} \sum_{k=1}^{\tilde n} X^{\dense}_{\tilde n,k}=\sqrt{\frac{n}{\tilde{n}}}\,S^{\dense}_{\tilde{n}}(t)\overset{\prob}{\longrightarrow}  \mathcal{G}(0,C\,\Gamma^{\dense}). \label{eq:asymp I_s_delta}
\end{align}
Since the processes $Z^{\sparse}$ and $Z^{\dense}$ are independent, overall we obtain \eqref{eq:sameorder}.

Now we follow the calculations in \citet{berger2024dense}, proof of Theorem 3: 

\noindent i): We demonstrate the manageability of the processes $n^{-1/2}X^{\dense}_{\tilde{n}} =n^{-1/2}(X^{\dense}_{\tilde{n},1},\hdots,X^{\dense}_{\tilde{n},\tilde{n}})$ regarding $\Phi^{\dense}_{\tilde{n}}=(\Phi^{\dense}_{\tilde{n},1},\dots,\Phi^{\dense}_{\tilde{n},\tilde{n}})$ with 
\begin{align*}
    \Phi^{\dense}_{\tilde{n},k}\defeq \frac{C_{4}\tilde{C}_{4}}{\sqrt{\tilde{n}}}\,(|Z^{\dense}_k(0)|+V^{\dense}_k),
\end{align*}
where the random variables are given in \eqref{eq:Z}. We apply Lemma 3 of \citet{berger2024dense} and distinguish between the following cases: $(h+\tilde{h})^{{\tilde{\beta}}}<\epsilon$ and $\epsilon \leq(h+\tilde{h})^{{\tilde{\beta}}}$.\\

\noindent $(h+\tilde{h})^{{\tilde{\beta}}}<\epsilon$: For $|x-y| \leq \epsilon^{1/{\tilde{\beta}}}$ by using \ref{item:W1}, we receive 
\begin{align*}
    \sqrt{\tilde{n}}\,\big|X^{\dense}_{\tilde{n},k}(x)-&X^{\dense}_{\tilde{n},k}(y)\big| \leq \bigg|\sum_{j=1}^{p}\sum_{l=1}^{\tilde{p}}w^{\sparse}_{j}(x;h)\,w^{\dense}_{l}(t^{\sparse}_{j};\tilde{h})\big(Z^{\dense}_k(t^{\dense}_l)-Z^{\dense}_k(x)\big)\bigg|\\
    &+ \big|Z^{\dense}_k(x)-Z^{\dense}_k(y)\big|+\bigg|\sum_{j=1}^{p}\sum_{l=1}^{\tilde{p}}w^{\sparse}_{j}(y;h)\,w^{\dense}_{l}(t^{\sparse}_{j};\tilde{h})\big(Z^{\dense}_k(y)-Z^{\dense}_k(t^{\dense}_l)\big)\bigg|.
\end{align*}
From the inequality above, we further bound these three terms. Applying statement \eqref{eq:Z} and $C_{4}, \tilde{C}_{4}\geq 1$, due to \ref{item:W1} and \ref{item:W5}, the second term is bounded by
\begin{align}
    \big|Z^{\dense}_k(x)&-Z^{\dense}_k(y)\big|\leq V^{\dense}_k |x-y|^{{\tilde{\beta}}}\leq C_{4} \tilde{C}_{4} 
 V^{\dense}_k\,\epsilon. \label{eq:7.77}
\end{align}
The remaining two terms are bounded by
\begin{align}
    \bigg|\sum_{j=1}^{p}\sum_{l=1}^{\tilde{p}}&w^{\sparse}_{j}(x;h)\,w^{\dense}_{l}(t^{\sparse}_{j};\tilde{h})\big(Z^{\dense}_k(t^{\dense}_l)-Z^{\dense}_k(x)\big)\bigg|\nonumber\\ 
    &\leq C_{4} \tilde{C}_{4}\big|Z^{\dense}_k(t^{\dense}_l)-Z^{\dense}_k(x)\big|\,\one_{|t^{\dense}_l-t^{\sparse}_{j}|\leq \tilde{h}, |t^{\sparse}_{j}-x| \leq h}\nonumber\\
    &\leq C_{4} \tilde{C}_{4}\big|Z^{\dense}_k(t^{\dense}_l)-Z^{\dense}_k(x)\big|\,\one_{|t^{\dense}_l-x|\leq h+\tilde{h}}\nonumber\\
    & \leq C_{4} \tilde{C}_{4} V^{\dense}_k\,(h+\tilde{h})^{{\tilde{\beta}}} \leq C_{4} \tilde{C}_{4} V^{\dense}_k\,\epsilon. \label{eq:7.888}
\end{align}
In conclusion, this leads to
\begin{align*}
    |X^{\dense}_{\tilde{n},k}(x)-X^{\dense}_{\tilde{n},k}(y)| &\leq 3\epsilon\,  \frac{C_{4} \tilde{C}_{4}}{\sqrt{\tilde{n}}}V^{\dense}_k
    \leq 3\epsilon \,\Phi^{\dense}_{\tilde{n},k} \qquad \text{for}\quad |x-y| \leq \epsilon^{1/{\tilde{\beta}}}.
\end{align*}
Therefore, the statement of Lemma 3 by \citet{berger2024dense} follows with $\kappa = 1/{\tilde{\beta}}$, $K_1 = 1$ and $K_2 = 3$.\\

\noindent $\epsilon \leq (h+\tilde{h})^{{\tilde{\beta}}}$:  For $|x-y| \leq (C_{2} C_{3})^{-1}\epsilon^{1+1/{\tilde{\beta}}},$ we get
\begin{align}
    |X^{\dense}_{\tilde{n},k}(x)-X^{\dense}_{\tilde{n},k}(y)|
    \leq & \frac{1}{\sqrt{\tilde{n}}}\sum_{j=1}^{p}\sum_{l=1}^{\tilde{p}}\big|\big(w^{\sparse}_{j}(x;h)-w^{\sparse}_{j}(y;h)\big)\,w^{\dense}_{l}(t^{\sparse}_{j};\tilde{h})\big|\big|Z^{\dense}_k(t^{\dense}_l)\big|\nonumber\\
    \leq &\,\frac{\tilde{C}_{4}}{\sqrt{\tilde{n}}}\big|Z^{\dense}_k(0)+V^{\dense}_k\big|\sum_{j=1}^{p} \big|w^{\sparse}_{j}(x;h)-w^{\sparse}_{j}(y;h)\big| \nonumber\\
    \leq& \quad \Phi^{\dense}_{\tilde{n},k} \sum_{j=1}^{p} \big|w^{\sparse}_{j}(x;h)-w^{\sparse}_{j}(y;h)\big| \label{eq:7.7}
\end{align}
due to \ref{item:W5} and $|Z^{\dense}_k(t)-Z^{\dense}_k(0)|\leq V^{\dense}_k$ almost surely for $t \in [0{,}1]$. Referring to \ref{assumption: design} and \ref{item:W4} and due to $\tilde{h}\in\tilde{\mathcal{H}}$ which implies there exists a constant $c_*>0$ such that $\tilde{h}/h \leq c^*$. Thus, inequality \eqref{eq:7.7} results in
\begin{align}
     \Phi^{\dense}_{\tilde{n},k}C_{2} C_{3} \min\bigg(\frac{|x-y|}{h},1\bigg) &\leq \Phi^{\dense}_{\tilde{n},k}|x-y|\bigg(\frac{C_{2} C_{3}}{h}\bigg) \nonumber \\
     &\leq (1+c^*)\,\epsilon\,\Phi^{\dense}_{\tilde{n},k}. \label{eq:7.8}
\end{align}
Here, the statement of Lemma 3 by \citet{berger2024dense} follows with $\kappa = 1+1/{\tilde{\beta}}$, $K_1 = (C_{2} C_{3})^{-1}$ and $K_2 = 2(1+c_*).$\\

\noindent For $\epsilon \in (0{,}1]$ we summarize the capacity number by $$\lambda_{{\tilde{\beta}}}(\epsilon) \defeq \max\big(3^{1/{\tilde{\beta}}},(1+c_*)2^{1/{\tilde{\beta}}+1}C_{2} C_{3}\big)\epsilon^{-(1+1/{\tilde{\beta}})}+2$$ 
due to
\begin{align*}
    N(\epsilon\|\alpha \circ\textbf{$\Phi$}_{\tilde{n}}(\omega)\|_{2},&\alpha\circ F_{\tilde{n}}(\omega);\|\cdot\|_2)\leq (1+c_*)\max\bigg(\frac{3^{1/{\tilde{\beta}}}}{2}\epsilon^{-1/{\tilde{\beta}}},\frac{2^{1/{\tilde{\beta}}+1}C_{2} C_{3}}{2}\epsilon^{-(1+1/{\tilde{\beta}})}\bigg)+2 \nonumber \\&\leq (1+c_*)\max\big(3^{1/{\tilde{\beta}}},2^{1/{\tilde{\beta}}+1}C_{2} C_{3}\big)\epsilon^{-(1+1/{\tilde{\beta}})}+2 \defeql \lambda_{{\tilde{\beta}}}(\epsilon).
\end{align*}

\noindent Further, Lemma 3 of \citet{berger2024dense} states that $\lambda_{{\tilde{\beta}}}(\epsilon)$ is integrabe.
\\

\noindent ii): We calculate the covariance function of the limit process, that is
\begin{align*}
    \lim_{n\rightarrow \infty} \expec\big[S^{\dense}_{\tilde{n}}(t)\,S^{\dense}_{\tilde{n}}(s)\big] \quad \text{for } t,s\in [0{,}1].
\end{align*}
For $k=1,\hdots,\tilde{n},$ the processes $Z^{\dense}_k$ are independent, which is why $\expec\big[Z^{\dense}_k(t),Z^{\dense}_{k'}(s)\big]=0$ holds for $1\leq k<k' \leq \tilde{n}$. 
From the sample path properties it follows that $\Gamma^{\dense} \in \mathcal{H}_{[0{,}1]^2}(\gamma_d, L_d)$ and hence 
\begin{align}
     \expec\big[&S^{\dense}_{\tilde{n}}(t)S^{\dense}_{\tilde{n}}(s)\big]  \nonumber\\=&\sum_{j,s=1}^{p}\sum_{l,r=1}^{\tilde{p}}\big(w^{\dense}_{l}(t^{\sparse}_{j};\tilde{h})\,w^{\sparse}_{j}(t;h)\big)\big(w^{\dense}_{r}(t^{\sparse}_s;\tilde{h})\,w^{\sparse}_{s}(s;h)\big)\Gamma^{\dense}(t^{\dense}_l,t^{\dense}_r)\overset{\tilde{n},n \rightarrow \infty}{\longrightarrow} \Gamma^{\dense}(t,s), \label{eq:7.10}
\end{align}
uniformly for $t,s\in [0{,}1]$. Due to $n \simeq \tilde{n}$,
\begin{align*}
    \lim_{\tilde{n}\rightarrow \infty}\expec\big[S^{\dense}_{\tilde{n}}(t)S^{\dense}_{\tilde{n}}(s)\big] = \Gamma^{\dense}(t,s)
\end{align*}
holds uniformly for $t,s\in [0{,}1]$.\\

\noindent iii): Since the second moments of the process $Z^{\dense}$ and the random variable $V$ are finite we take advantage of this to get
\begin{align}
    \sum_{k=1}^{\tilde{n}}\expec\Big[\big(\Phi^{\dense}_{\tilde{n},k}\big)^2\Big] \leq 2(C_{4}\tilde{C}_{4})^2\,\expec\Big[\big(|Z^{\dense}_1(0)|^2+\big(V_1^{\dense}\big)^2\,\big)\Big] < \infty.\label{eq:7.11}
\end{align}

\noindent iv): According to the Theorem on majorized convergence together with \eqref{eq:7.11}, for $\epsilon >0$ 
\begin{align*}   
    \sum_{k=1}^{\tilde{n}}&\,\expec\Big[\big(\Phi^{\dense}_{\tilde{n},k}\big)^2\one_{\{\Phi^{\dense}_{\tilde{n},k}>\epsilon\}}\Big]\\
    &\leq 2(C_{4}\tilde{C}_{4})^2\,\expec\Big[\big(|Z^{\dense}_1(0)|^2+\big(V_1^{\dense}\big)^2\,\big)\one_{\{Z^{\dense}_1(0)+V^{\dense}_1>\sqrt{\tilde{n}}\epsilon\}}\Big] \overset{\tilde{n} \rightarrow \infty}{\longrightarrow} 0 \,\text{ follows}.
\end{align*}

\noindent v): We obtain
\begin{align}
    &\rho_{\tilde{n}}^2(t,s) \nonumber\\
    &=\expec\bigg[\Big|\sum_{j=1}^{p}\sum_{l=1}^{\tilde{p}}w^{\dense}_{l}(t^{\sparse}_{j};\tilde{h})\big(w^{\sparse}_{j}(t;h)-w^{\sparse}_{j}(s;h)\big)Z^{\dense}_k(t^{\dense}_l)\Big|^2\bigg] \nonumber\\
    &=\sum_{l,r=1}^{\tilde{p}}\sum_{j,s=1}^{p}\Gamma^{\dense}(t^{\dense}_l,t^{\dense}_r)w^{\dense}_{l}(t^{\sparse}_{j};\tilde{h})w^{\dense}_{r}(t^{\sparse}_s;\tilde{h})\big(w^{\sparse}_{j}(t;h)-w^{\sparse}_{j}(s;h)\big)\big(w^{\sparse}_{s}(t;h)-w^{\sparse}_{s}(s;h)\big)\nonumber\\ 
    &= \sum_{l,r=1}^{\tilde{p}}\sum_{j,s=1}^{p}\Gamma^{\dense}(t^{\dense}_l,t^{\dense}_r)w^{\dense}_{l}(t^{\sparse}_{j};\tilde{h})w^{\dense}_{r}(t^{\sparse}_s;\tilde{h})w^{\sparse}_{j}(t;h)t^{\sparse}_s(t;h) \nonumber \\
    &\quad -2\sum_{l,r=1}^{\tilde{p}}\sum_{j,s=1}^{p}\Gamma^{\dense}(t^{\dense}_l,t^{\dense}_r)w^{\dense}_{l}(t^{\sparse}_{j};\tilde{h})w^{\dense}_{r}(t^{\sparse}_s;\tilde{h})w^{\sparse}_{j}(t;h)w^{\sparse}_{s}(s;h) \nonumber \\
    &\quad + \sum_{l,r=1}^{\tilde{p}}\sum_{j,s=1}^{p}\Gamma^{\dense}(t^{\dense}_l,t^{\dense}_r)w^{\dense}_{l}(t^{\sparse}_{j};\tilde{h})w^{\dense}_{r}(t^{\sparse}_s;\tilde{h})w^{\sparse}_{j}(s;h)w^{\sparse}_{s}(s;h) \label{eq:7.14}
\end{align}
Due to $\Gamma^{\dense} \in \mathcal{H}_{[0{,}1]^2}(\gamma_s,L)$ and $n\simeq \tilde{n},$ the above terms convergence uniformly to $\Gamma^{\dense}(t,t)-2\Gamma^{\dense}(t,s)+\Gamma^{\dense}(s,s)$ for $\tilde{n} \rightarrow \infty$ and $t,s\in[0{,}1].$
The limit value is well defined and therefore, the first part of v) is proven. Consider two deterministic sequences $(t_n)_{n\in \N}$, $(s_n)_{n\in \N}$, with $\rho(t_n, t_n') \rightarrow 0$. Then it follows:
\begin{align*}
    0 \leq \rho_n(t_n,s_n) \leq& |\rho_n(t_n,s_n)-\rho(t_n,s_n)| + |\rho(t_n,s_n)|\\
    \leq& \sup_{t,s \in [0{,}1]}|\rho_n(t,s)-\rho(t,s)| + |\rho(t_n,s_n)|\overset{\tilde{n} \rightarrow \infty}{\longrightarrow}0.
\end{align*}
\end{proof}

\section{hv cross validation}\label{section: hv cross validation}

We consider hv-block cross validation in a $K$-fold framework and proceed as follows: Observing curves of a time series of length $n$ we split them into test and train data according to a specific procedure, since \citet{Roberts2017} shows cross validation on random data splits tend to underestimate in structured data.
A test block of the length $\lfloor n/K \rfloor$ is given by a selection of indices for each $r$-th fold, $r=1,\hdots,K$: 
\begin{align*}
    I^{\text{test}}_r=\{(r-1)\lfloor n/K \rfloor+1,\hdots,r \lfloor n/K\rfloor\}, \quad r=1,\hdots,K.
\end{align*}
Blocks of contiguous time ensure greater independence between cross-validation folds, see \citet{racine2000} and \citet{Roberts2017}. Further, to decrease the dependency between the test block and a training block we reduce the indices training set $I^{\text{train}}_r$ by a gap length $g$ in each direction
\begin{align*}
    I^{\text{train}}_r = \{1,\hdots,n\} \setminus (G^-_r \cup I^{\text{test}}_r \cup G^+_r),
\end{align*}
where $G^-_r = ((r-1)\lfloor n/K \rfloor+1-\{1,\hdots,g\})\cap\{1,\hdots,n\}$ and $G^+_r=(r \lfloor n/K\rfloor+\{1,\hdots,g\})\cap\{1,\hdots,n\}$.
Evaluating the univariate local polynomial estimators and bivariate estimators $K$-times requires a given grid of bandwidths $h_w$, $w=1,\hdots,W.$ Finally we take the mean of these $K$ sup norm errors to obtain a bandwidth $h^{\text{\footnotesize cv}}$ corresponding to the minimal average sup-norm error
\begin{align}
      h^{\text{\footnotesize cv}}=\text{arg}\!\min \big\{\text{CV}(h_w)\mid w=1,\hdots,W\big\}, \label{bandwidth CV}
\end{align}
where $\text{CV}(\cdotp)$ is defined by
\begin{align*}
     \text{CV}(\tilde{h}_w) = \frac{1}{K} \sum_{r=1}^K \max_{l = 1,\hdots,\tilde{p}} \big|\bar{Y}^{\dense,\text{test},r}_{l}-\hat{\mu}^{\dense}(t^{\dense}_l;\tilde{h}_w,Y^{\dense,\text{train},-r}_{k,l})\big|
\end{align*}
in case of bandwidth selection for $\hat{\mu}^{\dense}_{\tilde n}$ and 
\begin{align*}
  \text{CV}(h_w) = \frac{1}{K} \sum_{r=1}^K \max_{l = 1,\hdots,p} \big|\bar{Y}^{\sparse,\text{test},r}_{j}-\hat{\mu}^{\dense}_{\tilde n}(t_j^{\sparse};\tilde{h}^{\text{\footnotesize cv}}) -\hat{\delta}_{\bign}(t^{\sparse}_j;h_w, \tilde{h}^{\text{\footnotesize cv}},Y^{\sparse,\text{train},-r}_{i,j})\big|
\end{align*}
in case of $\hat{\delta}_{\bign}$.

By considering bivariate lagged kernel estimators for $|b|\geq 0$ we use $\{Z_i^{\sparse}(t^{\sparse}_j)\mid i \in I^{\text{test}}_r, j =1,\hdots,p\}$ as the empirical lagged covariance matrix denoted by $(Z^{\sparse,\text{test},r}_{j,l})_{j,l=1,\hdots,p}$ and compute the empirical covariance matrix by taking the remaining train data set $\{Z_i^{\sparse}(t^{\sparse}_j) \mid i \in I^{\text{train}}_r, j =1,\hdots,p\}$ as input to the estimator \eqref{eq:Gamma_estimator} or \eqref{eq:crosscov}.
For each index group $I^{\text{test}}_r$ as the test set evaluating the bivariate local polynomial estimators \eqref{eq:Gamma_estimator} and \eqref{eq:crosscov} $K$-times we choose a bandwidth $h^{\text{\footnotesize cv}}$ in \eqref{bandwidth CV},
where in this case
\begin{align*}
    \text{CV}(h_w) = \frac{1}{K} \sum_{r=1}^K \max_{1\leq j < l \leq p} \big|\widehat{\Gamma}^{\sparse}_{n}(t^{\sparse}_j,t^{\sparse}_l;b;h_w,Z_{j,l}^{\sparse,\text{train},-r})-Z^{\sparse,\text{test},r}_{j,l}\big|
\end{align*} for $b=0$ and for $b\geq 1$
\begin{align*}
    \text{CV}(h_w) = \frac{1}{K} \sum_{r=1}^K \max_{1\leq j \leq l \leq p} & \big|\widehat{\Gamma}^{\sparse}_{n}(t^{\sparse}_j,t^{\sparse}_l;b;h_w,Z_{j,l}^{\sparse,\text{train},-r})+\widehat{\Gamma}^{\sparse}_{n}(t^{\sparse}_j,t^{\sparse}_l;b;h_w,Z_{j,l}^{\sparse,\text{train},-r})^\top\\
    &-Z^{\sparse,\text{test},r}_{j,l}-(Z^{\sparse,\text{test},r}_{j,l})^\top\big|.
\end{align*}
Due to symmetry for $b$ and $-b,$ $h^{\text{\footnotesize cv}}$ is chosen equally. Unless otherwise stated, we set $K = 5$ and $g=5$ for further analysis.

\section{Central Limit Theorem for triangular arrays under $\phi$-mixing}\label{Appendix}

We consider a triangular array $(Z_{n,i}: 1\leq i \leq n \in \N)$ of $C(T)\,$-$\,$valued random variables with $\expec[Z_{n,i}]=\mu_{n,i}\in \R$ and state following Assumption.

\begin{assumption}\label{assumption:process dependence}
    \begin{enumerate}[label=(A\arabic*)]
    \item\label{item:AA1} For some $\nu'>0$ and some even integer $J'\geq 2,$  \text{$\mathbb{E}[\|Z_{n,i}\|_{\infty}^{2+\nu'}],\mathbb{E}[\|Z_{n,i}\|_{\infty}^{J'}]<\infty$} hold for all $n\in\N.$      
    \item\label{item:AA2} The process $(Z_{n,i} :1\leq i\leq n \in \N)$ is stationary for all $n\in\mathbb{N}$.
    \item\label{item:AA3} There exists a real-valued random variable $M_{n,i}>0$, stationary in $i$ for all $n\in \N$ with $\mathbb{E}[M_{n,i}^{J'}]<\infty$, such that, for any $n\in\N$, $i=1,\ldots,n$ the inequality
        \begin{align*}
          |Z_{n,i}(t)-Z_{n,i}(s)|&\leq M_{n,i}\,\rho(t,s),
        \end{align*}
        hold almost surely for all $t,s\in T$. The constants $J'$ is the same as in \ref{item:AA1} and $\rho$ is a metric on $T$.
    \item\label{item:AA4} There exists a $\varphi\,$-$\,$mixing sequence $(Z_{i} :1\leq i\leq n \in \N)$ of $C(T)$-valued random variables with $\sigma(Z_{n,i}) \subseteq \sigma(Z_i)$. For $J'$ and some $\xi\in((2+2\nu')^{-1},1/2)$ its mixing coefficients satisfy
    \begin{align*}
        &\sum_{i=1}^\infty i^{1/(1/2-\xi)}\varphi(i)^{1/2} <\infty \quad \text{and} \quad \sum_{i=1}^\infty i^{J'/2-1}\varphi(i)^{1/J'}<\infty. 
    \end{align*}
\end{enumerate}
\end{assumption}

\noindent The covariance structure is equal in each row and in addition we assume a convergence to a covariance function $\gamma \in \mathcal{H}_{T^2}(\alpha_{\gamma};L_{\gamma})$ for some $\alpha_{\gamma}>0$, that is,
\begin{align*}
    \cov(Z_{n,i}(t),Z_{n,i'}(s))&=\gamma_{n}(t,s;|i-i'|)\longrightarrow \gamma(t,s;|i-i'|),
\end{align*}
 for $n\rightarrow \infty$. This convergence ensures that the limiting process admits a well-defined covariance structure, enabling to provide a central limit theorem with a Gaussian limit in $C(T)$.\\

 \begin{theorem}\label{theorem:CLT}
     Under Assumption \ref{assumption:process dependence} consider a triangular array $(Z_{n,i}:1\leq i \leq n \in \N)$ of $C(T)\,$-$\,$valued random variables with $\expec[Z_{n,i}]=\mu_{n,i}$ and the packing number satisfies \text{$\int_0^{\tau}D(u,\rho)^{1/J'}du < \infty$} for some $\tau>0$. Then we obtain the convergence in distribution
\begin{align*}
    \frac{1}{\sqrt{n}}\sum_{i=1}^n (Z_{n,i}-\mu_{n,i})\overset{D}{\longrightarrow} \mathcal{G}(0,\mathbf{\Gamma})\quad \text{with}\quad \mathbf{\Gamma}(t,s)\defeq \sum_{i=-\infty}^\infty\gamma(t,s;i) \quad \text{for } t,s\in T,
\end{align*} 
where $\mathcal{G}$ is a real-valued Gaussian process on $T$ with long-run covariance kernel $\mathbf{\Gamma}$.
 \end{theorem}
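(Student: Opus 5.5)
The approach is a reduction to the functional CLT of \citet{dette2020} for $\varphi$-mixing sequences in $C(T)$. The triangular-array structure is handled by the measurability condition \ref{item:AA4}, which is uniform in $n$, and by a separate treatment of the finite-dimensional limits. Write $\tilde Z_{n,i} = Z_{n,i}-\mu_{n,i}$ and $S_n = n^{-1/2}\sum_{i=1}^n \tilde Z_{n,i}$. The proof then has two parts: (a) convergence of the finite-dimensional distributions of $S_n$ to those of $\mathcal{G}(0,\mathbf{\Gamma})$, and (b) asymptotic tightness of $(S_n)$ in $C(T)$. Weak convergence in $C(T)$ follows from (a) and (b) together (\citet{vaart1996}, Theorem 1.5.4/1.5.7).

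For (a), fix $t_1,\dots,t_q\in T$ and $a\in\R^q$, and consider the scalar array $\xi_{n,i}=\sum_r a_r\tilde Z_{n,i}(t_r)$. It is stationary in $i$ by \ref{item:AA2}. It is $\varphi$-mixing with coefficients bounded by those of $(Z_i)$, since $\sigma(Z_{n,i})\subseteq\sigma(Z_i)$. Its $2+\nu'$ moments are bounded by \ref{item:AA1}, and these bounds must hold uniformly in $n$; if the assumption does not already give this, I would add it as a uniform moment condition. A central limit theorem for $\varphi$-mixing triangular arrays then applies, for instance via Bernstein's big-block/small-block method or Utev's or Peligrad's CLT. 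The limiting variance is
\[
\operatorname{Var}\Big(n^{-1/2}\sum_i \xi_{n,i}\Big)=\sum_{|b|<n}(1-|b|/n)\sum_{r,r'}a_ra_{r'}\gamma_n(t_r,t_{r'};b).
\]
This converges to $\sum_{r,r'}a_ra_{r'}\mathbf{\Gamma}(t_r,t_{r'})$ by dominated convergence over $b$. The dominating sequence comes from the covariance inequality for $\varphi$-mixing, $|\gamma_n(t,s;b)|\le 2\varphi(b)^{1/2}\|\tilde Z_{n,1}(t)\|_2\|\tilde Z_{n,1}(s)\|_2$, which is summable by the first series condition in \ref{item:AA4}. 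Termwise convergence is the assumed $\gamma_n\to\gamma$. The Cramér--Wold device completes (a).

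For (b), I would control increments with the moment inequality for $\varphi$-mixing sums (e.g.\ Yokoyama, or the Rosenthal-type bound used in \citet{dette2020}). Under the second series condition in \ref{item:AA4}, it gives
\[
\expec\big|S_n(t)-S_n(s)\big|^{J'}\le C\big(\expec|\tilde Z_{n,1}(t)-\tilde Z_{n,1}(s)|^{J'}+n^{1-J'/2}\expec|\tilde Z_{n,1}(t)-\tilde Z_{n,1}(s)|^{J'}\big),
\]
and by \ref{item:AA3} the right-hand side is at most $C'\,\expec[M_{n,1}^{J'}]\,\rho(t,s)^{J'}$. Hence $\|S_n(t)-S_n(s)\|_{L^{J'}}\le K\rho(t,s)$ with $K$ independent of $n$, provided $\sup_n\expec[M_{n,1}^{J'}]<\infty$, which I would read into \ref{item:AA3}. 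The chaining maximal inequality with $\psi(x)=x^{J'}$ (\citet{vaart1996}, Theorem 2.2.4) then yields, for every $\eta>0$,
\[
\expec\sup_{\rho(s,t)<\delta}|S_n(t)-S_n(s)|\lesssim \int_0^\eta D(u,\rho)^{1/J'}\,\rd u+\delta\, D(\eta,\rho)^{2/J'}.
\]
The entropy integral assumption makes this small uniformly in $n$ by choosing $\eta$ and then $\delta$ small. Together with tightness of $S_n(t_0)$ from (a), this gives asymptotic uniform equicontinuity. Continuity of the sample paths of the limit follows from the same bound.

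I expect the main obstacle to be the moment inequality for the triangular array, because the mixing is inherited only through the embedding $\sigma(Z_{n,i})\subseteq\sigma(Z_i)$. The $\varphi$-mixing coefficients of row $n$ are bounded by $\varphi(\cdot)$ of $(Z_i)$, uniformly in $n$. So every constant in the Yokoyama/Rosenthal bound and in the covariance inequality depends only on $\varphi$ and on the uniform moment bounds, and the single-sequence proof of \citet{dette2020}, Theorem 1 can be repeated row by row. A second, smaller difficulty is the passage from $\gamma_n$ to $\gamma$ in the variance, which I handle by the dominated convergence argument in (a).
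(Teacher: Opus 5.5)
Your proposal is correct and follows the route the paper has in mind. The paper gives no separate proof; it presents the result as a row-wise triangular-array extension of \citet{dette2020}, Theorem 1. That argument is exactly yours: fidi convergence for $\varphi$-mixing arrays, plus tightness from a Yokoyama/Rosenthal-type moment bound combined with the chaining inequality of \citet{vaart1996}, Theorem 2.2.4.

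The uniform-in-$n$ moment bounds you read into \ref{item:AA1} and \ref{item:AA3} are genuinely needed, because the statement fails with mere finiteness for each $n$. Take constant functions $Z_{n,i}\equiv \sqrt{n}\,(\one\{U_i<1/(2n)\}-\one\{U_i>1-1/(2n)\})$ with $U_i$ i.i.d.\ uniform. These satisfy all stated hypotheses with $\gamma_n=\gamma$ and $\varphi\equiv 0$. Yet $n^{-1/2}\sum_i Z_{n,i}$ converges to a difference of two independent Poisson$(1/2)$ variables, which is not Gaussian.
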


\section{Additional numerical results}

\subsection{Results for the simulations in Section \ref{section:quantiles}}\label{sec:additionalsim}

Similar to Table \ref{tab:bandwidth mean delta constant} by applying the hv-block cross validation in a $5$-fold framework and averaging over $N=1{,}000$ sample repetitions, Figure \ref{fig:combined}(a) contains the selected bandwidths of the long run covariance kernel estimator $\widehat{\mathbf{\Gamma}}^{\sparse}(\cdotp,\cdotp;\bigh,3)$ for $p\in \{25,50,75\}$ and $n = 400$, respectively the bandwidths of $\widehat{\mathbf{\Gamma}}^{\dense}(\cdotp,\cdotp;\tilde{\bigh},3)$ for $\tilde{p} = 100$ and $\tilde{n} = 480$. In addition Figure \ref{fig:combined}(b) shows the corresponding boxplots of this repeated method, whose observed increase in lag can be explained by a smoother structure as the lag increases. The variability in each boxplot may be due to additional variation resulting from the estimation in the test data.

\begin{figure}[h!]
    \centering
    \scriptsize
    
    \begin{subfigure}[b]{0.48\linewidth}
        \centering
        \renewcommand{\arraystretch}{1.3}
        \begin{tabular}{c|c|ccc}
             & $\tilde{n} = 480$ & \multicolumn{3}{c}{$n = 400$} \\
             $(\bar{\tilde{h}}_{N}^{\text{\footnotesize cv}})_i,(\bar{h}_{N}^{\text{\footnotesize cv}})_i$ & $\tilde{p} = 100$ & $p = 75$ & $p = 50$ & $p = 25$ \\
            \hline
            \rule{0pt}{2.8ex}$i=0$ & $0{.}51$ & $0{.}53$ & $0{.}54$ & $0{.}58$ \\
            \rule{0pt}{2.8ex}$i=1$ & $0{.}51$ & $0{.}53$ & $0{.}56$ & $0{.}57$ \\
            \rule{0pt}{2.8ex}$i=2$ & $0{.}55$ & $0{.}55$ & $0{.}58$ & $0{.}58$ \\
            \rule{0pt}{2.8ex}$i=3$ & $0{.}55$ & $0{.}57$ & $0{.}58$ & $0{.}58$ \\
        \end{tabular}\\
        \vspace{5mm}
        \caption{Mean.}
    \end{subfigure}
    \hfill
    \begin{subfigure}[b]{0.48\linewidth}
        \centering
        \includegraphics[width=\linewidth]{Simulation_setup/Delta_=_0/hv_bandwidth_selection.png}
        \caption{Boxplot for $n=400$.}
    \end{subfigure}
    
    \caption{Bandwidth selection for $\widehat{\mathbf{\Gamma}}^{\dense}(\cdotp,\cdotp;\bar{\tilde{\bigh}}_{N}^{\text{\footnotesize cv}},3)$ and $\widehat{\mathbf{\Gamma}}^{\sparse}(\cdotp,\cdotp;\bar{\bigh}_{N}^{\text{\footnotesize cv}},3)$ with $N=1{,}000$ repetitions.}
    \label{fig:combined}
\end{figure}

\begin{table}[h!]
    \centering{
    \scriptsize
    \renewcommand{\arraystretch}{1.3}
    \begin{tabular}{c|c|c|c|c}
          $(\bar{\tilde{h}}_{N}^{\text{\footnotesize cv}})_i$ & $i=0$ & $i=1$ & $i=2$ & $i=3$\\
        \hline
        \rule{0pt}{2.8ex} $\,\,\tilde{n} = 480, \tilde{p} = 100$ & $0{.}53$ & $0{.}52$ & $0{.}56$ & $0{.}55$ \\
    \end{tabular}
    }
    \caption{Bandwidth selection for $\widehat{\mathbf{\Gamma}}^{\dense}(\cdotp,\cdotp;\bar{\tilde{\bigh}}_{N}^{\text{\footnotesize cv}},3)$, mean of $N=1{,}000$ repetitions, in case of non-constant difference function.}
    \label{tab:bandwidth cov delta not constant}
\end{table}

\subsection{Methodological details for Section \ref{sec:apllilust}}\label{sec:detailsmehtod}

First we discuss the implementation of cross-validation for choosing the bandwidths $h$ and $\tilde h$ in the estimator of the difference function, as well as the bandwidths $h_0$ and $\tilde h_0$ used for estimating covariance kernels. 
To maximize the amount of data used for model fitting in each cross-validation run, each month of the year can be treated as a separate fold since the fulfillment of the required independence assumptions is reasonable which is in accordance with  \citet{Roberts2017}. To stabilize the bandwidth selection and reduce the computational effort, we sum the months respectively to their season and run the cross-validation method on them, which is reasonable since these months are similar in terms of their patterns.

\begin{table}[h!]
    \centering{
    \scriptsize
    \renewcommand{\arraystretch}{1.3}
    \begin{tabular}{c|c|c|c|c|c|c|c|c|c|c|c|c}
                         Month                                        & Jan     & Feb     & Mar     & Apr     & May     & Jun     & Jul     & Aug     & Sep     & Oct     & Nov     & Dec     \\
    \hline
        \rule{0pt}{2.5ex}$\tilde n$                                   & $797$   & $725$   & $794$   & $767$   & $787$   & $760$   & $792$   & $795$   & $764$   & $798$   & $772$   & $801$   \\
        \hline
        \rule{0pt}{2.5ex} $\tilde{h}^{\text{\footnotesize cv}}$   (h) & $1{.}7$ & $1{.}7$ & $1{.}7$ & $1{.}7$ & $1{.}7$ & $1{.}4$ & $1{.}4$ & $1{.}4$ & $1{.}6$ & $1{.}6$ & $1{.}6$ & $1{.}7$ \\
        \rule{0pt}{2.5ex} $\tilde{h}_0^{\text{\footnotesize cv}}$ (h) & $1{.}8$ & $1{.}8$ & $1{.}8$ & $1{.}8$ & $1{.}8$ & $1{.}8$ & $1{.}8$ & $1{.}8$ & $1{.}8$ & $1{.}8$ & $1{.}8$ & $1{.}8$ \\
        \rule{0pt}{2.5ex} $\tilde{m}$                                         & $2$     & $2$     & $2$     & $2$     & $3$     & $4$     & $2$     & $2$     & $2$     & $2$     & $2$     & $1$     \\
        \hline
        \rule{0pt}{2.5ex}$n$                                          & $651$   & $594$   & $651$   & $630$   & $651$   & $630$   & $651$   & $651$   & $630$   & $651$   & $630$   & $651$   \\
        \hline
        \rule{0pt}{2.5ex} $h^{\text{\footnotesize cv}}$           (h) & $2{.}7$ & $2{.}7$ & $4{.}2$ & $4{.}2$ & $4{.}2$ & $3{.}1$ & $3{.}1$ & $3{.}1$ & $5{.}1$ & $5{.}1$ & $5{.}1$ & $2{.}7$ \\
        \rule{0pt}{2.5ex} $h_0^{\text{\footnotesize cv}}$         (h) & $3{.}4$ & $3{.}4$ & $3{.}4$ & $3{.}4$ & $3{.}4$ & $3{.}4$ & $3{.}4$ & $3{.}4$ & $3{.}4$ & $3{.}4$ & $3{.}4$ & $3{.}4$ \\
        \rule{0pt}{2.5ex} $m$                                         & $2$     & $2$     & $2$     & $2$     & $3$     & $4$     & $2$     & $2$     & $2$     & $2$     & $2$     & $1$    

    \end{tabular}
    }
    \caption{Bandwidth selection for $\hat{\mu}^{\dense}_{\tilde n}(\cdotp;\tilde{h}^{\text{\footnotesize cv}})$, $\hat{\delta}_{\bign}(\cdotp;h^{\text{\footnotesize cv}},\tilde{h}^{\text{\footnotesize cv}})$, $\widehat{\Gamma}^{\sparse}_n(\cdotp,\cdotp;h_0^{\text{\footnotesize cv}})$ and $\widehat{\Gamma}_{\tilde{n}}^{\dense}(\cdotp,\cdotp; \tilde{h}_0^{\text{\footnotesize cv}})$.}
    \label{tab:Berlin bandwidth application}
\end{table}

Table \ref{tab:Berlin bandwidth application} shows the bandwidth selection in Berlin for the estimator of the averaged daily weather curve from $2000$ to $2025$ ($\mu^{\dense}$), of the difference function ($\delta$) and of the lag $0$ covariance kernels from $2000$ to $2025$ ($\Gamma^{\dense}$) as well as from $1952$ to $1972$ ($\Gamma^{\sparse}$). 
The larger bandwidths chosen in the second step for estimating $\delta$ compared to those for $\mu^{\dense}$ from the first step are consistent with the assumption that $\alpha_{\delta} \geq \alpha$. Further, based on the insights from the simulation in Section \ref{subsection: Simulation}, which show that the bandwidth  does not substantially influence the structure of the covariance kernel estimate, we decided to choose fixed values in the application, and for the cross covariance kernels take $h_b^{\text{\footnotesize cv}}=1{.}1\cdot h_{b-1}^{\text{\footnotesize cv}}$ for $b = 1,\hdots,m$ and $\tilde{h}_b^{\text{\footnotesize cv}}=1{.}1\cdot \tilde{h}_{b-1}^{\text{\footnotesize cv}}$ for $b = 1,\hdots,\tilde{m}$. 

Furthermore, Table \ref{tab:Berlin bandwidth application} contains the maximum lags of the estimator of the long run covariance kernel from $1952$ to $1972$ ($m$) and from $2000$ to $2025$ ($\tilde m$)  for each month. These were determined by using the test from  \citet{Kokoszka2017} to assess the cumulative significance of empirical lagged covariance kernels at level $95$\%. The results of these tests are contained in Figure \ref{fig:test of lagged kernels}. 

\begin{figure}[h!]
    \centering
    \begin{subfigure}[b]{0.95\linewidth}
        \includegraphics[width=\linewidth]{Application/Berlin/lag_test_Berlin.png}
    \end{subfigure}
    \caption{Test to assess the cumulative significance of empirical lagged covariance kernels at level $95$\%.}
    \label{fig:test of lagged kernels}
\end{figure}

Figure \ref{fig:sd function application} shows the monthly estimated standard deviation of the covariance kernels along the diagonal, $\widehat{\Gamma}^{\sparse}_n(\cdotp,\cdotp;h_0^{\text{\footnotesize cv}})^{1/2}$ and $\widehat{\Gamma}^{\dense}_n(\cdotp,\cdotp;\tilde{h}_0^{\text{\footnotesize cv}})^{1/2}$ as well as Figure \ref{fig:sd long run function application} illustrates the monthly estimators $\widehat{\mathbf{\Gamma}}^{\sparse}(\cdotp,\cdotp;\bigh^{\text{\footnotesize cv}},m)^{1/2}$ and $\widehat{\mathbf{\Gamma}}^{\dense}(\cdotp,\cdotp; \tilde{\bigh}^{\text{\footnotesize cv}},\tilde{m})^{1/2}$, where $m$ and $\tilde{m}$ are the maximum of significant lag sequence from the dense and sparse data set of each month from Table \ref{tab:Berlin bandwidth application}. While both curves are similar for each time period, the long-run standard deviation function is larger than that of the lag-zero covariance kernel. Thus, taking into account serial dependence results in larger bands, which is essential to properly keep the nominal level. 

\begin{figure}[t!]
    \centering
    \begin{subfigure}[b]{0.49\linewidth}
        \includegraphics[width=\linewidth]{Application/Berlin/sd_dense_Berlin.png}
        \caption{$2000-2025$.}
    \end{subfigure}
    \hfill
    \begin{subfigure}[b]{0.49\linewidth}
        \includegraphics[width=\linewidth]{Application/Berlin/sd_sparse_Berlin.png}
        \caption{$1952-1972$.}
    \end{subfigure}
\caption{Berlin (Germany): Estimation of lagged $0$ standard deviation function.}
    \label{fig:sd function application}
\end{figure}

\begin{figure}[t!]
    \centering
    \begin{subfigure}{0.49\linewidth}
        \includegraphics[width=\linewidth]{Application/Berlin/sd_dense_long_run_Berlin.png}
        \caption{$2000-2025$.}
    \end{subfigure}
    \hfill
    \begin{subfigure}{0.49\linewidth}
        \includegraphics[width=\linewidth]{Application/Berlin/sd_sparse_long_run_Berlin.png}
        \caption{$1952-1972$.}
    \end{subfigure}
\caption{Berlin (Germany): Estimation of long run standard deviation function.}
    \label{fig:sd long run function application}
\end{figure}

\end{document}